\newcommand{\stkout}[1]{\ifmmode\text{\sout{\ensuremath{#1}}}\else\sout{#1}\fi}
\def\BibTeX{{\rm B\kern-.05em{\sc i\kern-.025em b}\kern-.08em
    T\kern-.1667em\lower.7ex\hbox{E}\kern-.125emX}}
\newcommand{\Mod}[1]{\ \mathrm{mod}\ #1}
\newtheorem{prop}{Proposition}
\newcommand\bydef{\stackrel{{\rm def}}{=}}
\tikzset{every picture/.style={line width=0.75pt}}
\def\PP{{\mathbb P}}
\DeclareMathOperator*{\argmax}{arg\,max}
\newcounter{mytempeqncnt}
\begin{document}

\title{Theoretical Performance of LoRa System in Multi-Path and Interference Channels}


\author{Clément~Demeslay,~\IEEEmembership{Student Member,~IEEE,}
        Philippe~Rostaing,~\IEEEmembership{Member,~IEEE,}
        and~Roland~Gautier,~\IEEEmembership{Member,~IEEE}
\IEEEcompsocitemizethanks{\IEEEcompsocthanksitem C. Demeslay, P. Rostaing and R. Gautier are with CNRS UMR 6285, Lab-STICC, from University of Brest, CS 93837, 6 avenue Le Gorgeu, 29238 Brest Cedex 3, France.\protect\\
E-mail: clement.demeslay@univ-brest.fr, philippe.rostaing@univ-brest.fr, roland.gautier@univ-brest.fr}
}

\makeatletter
\def\ps@IEEEtitlepagestyle{%
  \def\@oddfoot{\mycopyrightnotice}%
  \def\@oddhead{\hbox{}\@IEEEheaderstyle\leftmark\hfil\thepage}\relax
  \def\@evenhead{\@IEEEheaderstyle\thepage\hfil\leftmark\hbox{}}\relax
  \def\@evenfoot{}%
}
\def\mycopyrightnotice{%
  \begin{minipage}{\textwidth}
  \centering \scriptsize
  Copyright~\copyright~2021 IEEE. Personal use of this material is permitted. However, permission to use this material for any other purposes must be obtained from the IEEE by sending a request to pubs-permissions@ieee.org.
  \end{minipage}
}
\makeatother

\maketitle

\begin{abstract}
This paper presents a semi-analytical approximation of Symbol Error Rate ($SER$) for the well known LoRa Internet of Things ($IoT$) modulation scheme in the following two scenarios: 1) in multi-path frequency selective fading channel with Additive White Gaussian Noise ($AWGN$) and 2) in the presence of a second interfering LoRa user in flat-fading $AWGN$ channel.
Performances for both coherent and non-coherent cases are derived by considering the common Discrete Fourier transform ($DFT$) based detector on  the received LoRa waveform.
By considering these two scenarios, the detector exhibits parasitic peaks that severely degrade the performance of the LoRa receiver.
We propose in that sense a theoretical expression for this result, from which a unified framework based on peak detection probabilities allows us to derive $SER$, which is validated by Monte Carlo simulations.
Fast computation of the derived closed-form $SER$ allows to carry out deep performance analysis for these two scenarios.
\end{abstract}

\begin{IEEEkeywords}
LoRa waveform, chirp modulation, multi-path channel, LoRa interference,  performance analysis.
\end{IEEEkeywords}

\section{Introduction}

The Internet of Things ($IoT$) is experiencing striking growth since the past few years enabling much more devices to communicate and allowing many scenarios to be a reality such as smart cities.
The number of $IoT$ devices is expected to rapidly grow, jumping from almost 10 to more than 21 billion \cite{IoTDevicesNumber}.
Many technologies were developed in that sense relying on licensed bands (Narrow Band $IoT$ ($NB-IoT$), Extended Coverage GSM (EC-GSM) and LTE-Machine (LTE-M)) or unlicensed bands such as SigFox, Ingenu, Weightless or Long Range (LoRa) \cite{goursaud}.
We will focus on LoRa in this paper.
LoRa was initially developed by the French company Cycleo in 2012 \cite{lora_patent} and is now the property of Semtech company, the founder of LoRa Alliance. 
LoRa is nowadays a front runner of LP-WAN solutions and holds a lot of attention by the scientific research community. 
Due to its patented nature, initial research was mainly based on retro-engineering of existing LoRa transceivers \cite{knight_gnuradio}. 
The first paper to provide a rigorous mathematical representation of LoRa signals and its demodulation scheme was achieved by \cite{vangelista}.
Further research were conducted focusing on LoRa network capacity enhancements \cite{joerg3}, channel coding improvements \cite{joerg4},\cite{imt} or temporal and frequency synchronization techniques \cite{xhonneux},\cite{bernier}.

There was previous research work addressing Multi-Path Channel ($MPC$) impact on performance with an experimental point of view (see Section \ref{subsec:RelatedWorkMPC}) but to the best of our knowledge, theoretical assessment of a such effect has not been investigated yet in the literature.
Even if $MPC$ may produce small Inter Symbol Interference ($ISI$) depth for the detection of the current symbol, the coherent and non-coherent detectors are very sensitive to one or several significant echoes.
This situation may be encountered in outdoor environments (see Section \ref{LoRaChannelEffect} for more details).
In this paper, a tight approximation of $SER$ in $MPC$ is proposed for  coherent and non-coherent LoRa detection schemes.
Performance degradation of LoRa modulation is then studied for the two-path and the exponential decay channel models.

We  show  that  additional  interference peaks appear by using the common Discrete Fourier  Transform ($DFT$)-based detector on the received dechirping LoRa waveform over $MPC$. 
$SER$ derivation can  be seen as a problem of correct peak detection at the $DFT$-output against interference and noise peaks in presence of $AWGN$.
$SER$ is derived in two steps as follows. 
First, close approximations are provided to obtain closed-form expression, in terms of Cumulative Distribution Functions ($CDF$s), of the detection probability of the correct peak for a given noise random sample at the corresponding $DFT$-output.
Similar developments have been done for the flat-fading $AWGN$ channel \cite{ferre1} but we extend here results for $MPC$. 
And secondly, expectation of the closed-form expression over complex Gaussian random variable is computed numerically by using two-dimensional Cartesian products of one-dimensional Gauss-Hermite quadrature formulae \cite{Abramowitz1972}.
The accuracy of the derived approximations is then confirmed by comparisons to numerical results.
This procedure provides a fast and tight $SER$ estimation which allows to carry out deep performance analysis of LoRa modulation in $MPC$.

From the same unified framework, performance evaluation in terms of $SER$ is derived in case of LoRa interference, where the desired LoRa signal is corrupted by interfering LoRa signal in $AWGN$ flat-fading channel. 
A fine analysis of performance in relation to path delay and complex path gain of the interferer channel is also addressed.
We compare our derived $SER$ expression with previous work in \cite{joerg5} about performance of LoRa interference where the  $SER$ expression is derived by a different approach.

\textcolor{black}{
The novelty and contributions of the paper can be highlighted in the following:
\begin{itemize}
    \item Proposing a unified framework to derive semi-analytical $SER$ approximation for both discrete-time $MPC$ and aligned LoRa interference with the same Spreading Factor ($SF$) parameter. In a first approximation, we supposed the channel path delays (or the time delay of the interferer LoRa signal) as multiple of sampling rate.
    \item Analyzing LoRa performance for coherent and non-coherent receivers for the one-tap, and the exponential decay, discrete channel models.
    \item Assessing LoRa performance for the non-coherent receiver in presence of LoRa interference.
Performance analysis is studied as a function of LoRa interferer path delays, it confirms results on previous research work \cite{joerg5} and adds some noticeable refinements for specific LoRa interferer delays around the half of the symbol period.
\end{itemize}}

The remainder of the article is organized as follows.
In Section \ref{LoRaOverview}, we review the basics of LoRa physical layer. 
In Section \ref{LoRaPartIII}, we introduce the $MPC$ model and its application to LoRa wave-forms. 
The semi-analytical  \textcolor{black}{$SER$} expression is then derived under several close approximations in Section \ref{sec::SEP}.
Section \ref{sec:lora_interf} introduces the LoRa interference model, the associated theoretical  \textcolor{black}{$SER$} and the study of performance impact of the relative phase difference between the signal of interest and the interferer.
Finally, in Section \ref{sec:Simu}, simulation results confirm theoretical derivations for both $MPC$ and LoRa interference models.

\textcolor{black}{
\section{Related work}
There was many studies in the literature focusing on $MPC$ and LoRa interference impact that were possible thanks to preliminary LoRa research dealing with theoretical performance of original non-coherent LoRa system.
We can cite for example the authors in \cite{joerg2} who derived a closed-form approximation of LoRa $BER$ performance in $AWGN$ and Rayleigh channels.
Their work was further extended to the case of coded LoRa communications \cite{burg} in which interleaving, Gray and Hamming coding are considered.
They evaluate performance with residual Carrier Frequency Offset ($CFO$) and show that the latter has strong impact on performance and therefore proper $CFO$ mitigation techniques must be implemented.
\subsection{$MPC$}\label{subsec:RelatedWorkMPC}
The impact of $MPC$ channels was investigated with the following studies.
The authors in \cite{imt2} evaluated the impact of time/frequency selective channels on demodulation process and highlighted the good LoRa resiliency on these channels and especially with the latter.
Furthermore, an improved LoRa detector based on cyclic cross-correlation to combat $MPC$ was also proposed in \cite{guo}.
Experimental approach was proposed in \cite{staniec} where the authors assessed the impact of both electromagnetic interference and heavy $MPC$ in anechoic and reverberation chambers.
They came with the conclusion that LoRa is robust to  their experimental Rayleigh fading  $MPC$ only for $SF \ge 10$  and whatever the signal bandwidth.
\subsection{LoRa interference}
The LoRa interference case was also considered with many studies evaluating theoretical performance in same $SF$ with the interference signal delayed by an integer number of sampling periods (aligned) in \cite{joerg5}.
It is shown in this study that a such interference has dramatic impact on performance, leading to a sensitivity threshold approaching $\infty$ if the 
signal-to-interference ratio ($SIR$) approaches $SIR_{dB}=0$.
The authors also highlights good LoRa resiliency, with interference free performance recovered with $SIR$ greater than $10$ dB.
Non-integer interference delay (non-aligned) has been also investigated in \cite{afisiadis},\cite{afisiadis2} and the latter was recently extended in \cite{afisiadis3} to the coherent receiver and including hardware impairments such as $CFO$.
Algorithms were also designed to enable the decoding of a LoRa symbol stream contaminated by a single or multiple LoRa user \cite{rachkidy},\cite{laporte}.
}

\section{LoRa modulation overview}\label{LoRaOverview}

\subsection{LoRa wave-forms}\label{LoRaSignals}

In the literature, LoRa wave-forms are of the type of Chirp Spread Spectrum ($CSS$) signals.
These signals rely on sine waves with Instantaneous Frequency ($IF$) that varies linearly with time over frequency range $f \in [-B/2,B/2]$ ($B\in\{125,250,500\}$~kHz) and time range $t \in [0,T]$ ($T$ the symbol period).
This basic signal is called an \emph{up-chirp} or \emph{down-chirp} when frequency respectively increases or decreases over time.
A LoRa symbol consists of $SF$ bits ($SF \in \{7,8,\ldots,12\}$) leading to an $M$-ary modulation with $M = 2^{SF} \in \{128,256,\ldots,4096\}$.
In the discrete-time signal model, the Nyquist sampling rate ($F_s=1/T_s$) is usually used \textit{i.e.} $T_s = 1/B = T/M$.
The signal symbol has then $M$ samples.
Each symbol $a \in \{0,1,\ldots,M-1\}$ is mapped to an \emph{up-chirp} that is temporally shifted by $\tau_a = a T_s$ period.
We may notice that a temporal shift $\tau_a=aT_s$ conducts to shift by $aB/M=a/(MT_s)=a/T$ the $IF$.
The modulo operation is applied to ensure that $IF$ remains in the interval $[-B/2,B/2]$.
This behavior is the heart of $CSS$ process.
A mathematical expression of LoRa wave-form sampled at $t = k T_s$ has been derived in\cite{chiani} :

\begin{equation}
    x(kT_s;a) \triangleq x_a[k] = e^{2j\pi k \left( \frac{a}{M} - \frac{1}{2} + \frac{k}{2M} \right)} \quad k = 0,1,\ldots,M-1
    \label{eqxa}
\end{equation}

We may see that an \emph{up-chirp} is actually a LoRa wave-form with symbol index $a = 0$, written $x_0[k]$.
Its conjugate $x_0^*[k]$ is then a \emph{down-chirp}. 

\subsection{LoRa demodulation scheme}\label{LoRaDemodScheme}

Reference \cite{vangelista} proposed a simple and efficient solution to demodulate LoRa signals.
In $AWGN$ channel, the demodulation process is based on the Maximum Likelihood ($ML$) detection scheme.
The received signal is:
\begin{equation}
    r[k] = x_a[k] + w[k]
\end{equation}
with $w[k]$ a complex $AWGN$ with zero-mean and variance $\sigma^2 = E[|w[k]|^2]$.
$ML$ detector aims to select index $\widehat{a}$ that maximises the scalar product $\langle r[k],x_n[k] \rangle$ for $n \in \{0,1,\ldots,M-1\}$ defined as:
\begin{equation}
\begin{split}
    \langle r[k],x_n[k] \rangle &= \sum_{k=0}^{M-1} r[k] x_n^*[k]\\
    &= \sum_{k=0}^{M-1} \underbrace{r[k]x_0^*[k]}_{\Tilde{r}[k]} e^{-j2\pi\frac{n}{M}k}= \Tilde{R}[n].
\end{split}
\end{equation}

The demodulation stage proceeds with two simple operations:
\begin{itemize}
    \item multiply the received signal by the \emph{down-chirp} $x_0^*[k]$, also called dechirping,
    \item compute $\Tilde{R}[n]$, the $DFT$ of $\Tilde{r}[k]$ and select the discrete frequency index $\widehat{a}$ that maximizes  $\Tilde{R}[n]$.
    \label{eqDemodLoRa}
\end{itemize}

This way, the dechirp process merges all the signal energy in a unique frequency bin $a$ and can be easily retrieved by taking the magnitude (non-coherent detection) or the real part (coherent detection) of $\Tilde{R}[n]$.
The symbol detection is then:

\begin{equation}
    \begin{split}
        \widehat{a}_{NCOH} &= \underset{n}{\argmax} \quad |\Tilde{R}[n]|^2 \equiv \underset{n}{\argmax} \quad |\Tilde{R}[n]|
    \end{split}
    \label{eqDemodLoRaNCOH}
\end{equation}

for non-coherent detection, and:

\begin{equation}
     \widehat{a}_{COH} = \underset{n}{\argmax} \quad \Re \{ \Tilde{R}[n] \}
    \label{eqDemodLoRaCOH}
\end{equation}

for coherent detection with $\Re\{x\}$ denoting the real part of complex number $x$.

\section{Multi-path channel on LoRa signal}\label{LoRaPartIII}
\subsection{Multi-path channel model}\label{LoRaChannelEffect}

We study in this section the effect of $MPC$ on LoRa signals.
The discrete-time channel model used is as follows:

\begin{equation}
     c[k] = \sum_{i=0}^{K-1} \alpha_i \delta[k - k_i]
    \label{eqChannelModel}
\end{equation}

with $K$ the number of paths and $\alpha_i=|\alpha_i|e^{j\phi_i}$ the complex path gain arriving at tap $k_i$.
A sufficient condition to consider a channel as frequency selective is $k_i \ge 1$ \textit{e.g.} $B = 500$~kHz, $T_s = 1/B = 2~\mu$s.
This value is a typical path delay seen in outdoor environments (few $\mu$s usually).
Indeed, the COST 207 channel model, originally developed for Global System for Mobile Communications ($GSM$)
in \cite{cost_207}, propose typical channel tap settings for various situations such as difficult hilly urban environments, denoted as Bad Urban ($BU$) channel.
LoRa devices are susceptible to be implemented in this type of environments.
Furthermore, $GSM$ bands ($GSM 900$) are close to LoRa ones (868 MHz in Europe), COST 207 channel is thus relevant in this case.
The 12-tap $BU$ channel configuration exhibits many echoes with strong magnitudes and high relative delays.
The $9^{th}$ tap has for example a delay of 6 $\mu s$ and a relative power of 0.8.
This corresponds for LoRa to have $k_i = 3$ for $B = 500$ kHz.
We expect then that the largest echo $k_{max} \ll M$, that is, an $ISI$ only between the current and previous symbol over a reduced number of samples.
The symbol detector presented herein is very sensitive to significant path delays although $ISI$ depth is small.
In this section, we evaluate the performance impact of $MPC$ on LoRa wave-forms.

We consider a set of transmitted symbols $a_l$ ($l=0,\ldots L-1$) as:
\begin{equation}
     s[k^{\prime}] = \sum_{l=0}^{L-1} x_{a_l}[k^{\prime}\!\Mod{M}]
    \label{eqsk}
\end{equation}

for $k^{\prime} = k + lM$ and $k=0,\ldots,M-1$.
The received signal is then:
\begin{equation}
     r[k^{\prime}] = \underbrace{c[k^{\prime}] * s[k^{\prime}]}_{m[k^{\prime}]} + w[k^{\prime}].
    \label{eqyk}
\end{equation}

We note $\sigma_{\Re}^2 = \sigma_{\Im}^2 = \sigma^2/2$, the variance of real and imaginary part of $w[k^\prime]$.
$m[k^{\prime}]$ is the received waveform after channel effect.

\subsection{Channel effect on LoRa wave-form}

Let us denote $r_a[k]$ the received signal plus noise for detecting the current
symbol $a$ into its symbol interval for $k=0,\ldots,M-1$.
We suppose that the receiver is synchronized on the first path (\textit{i.e.} $k_0=0$).

\begin{prop}
Performing the \emph{down-chirp} operation $x_0^*[k]$ to $r_a[k]$ yields:
\setlength\arraycolsep{2pt}
\begin{eqnarray}
\nonumber
  \Tilde{r}_a[k]&=&x_0^*[k]r_a[k] \\
&=&\alpha_0 e^{2j\pi k \frac{a}{M}} +
  \sum_{i=1}^{K-1} \Tilde{\alpha}_i(\overline{a}) e^{2j\pi k
    \frac{\overline{a} - k_i}{M}} + \Tilde{w}[k]
  \label{eqrak2_IES}
\end{eqnarray}
where $\Tilde{w}[k] \sim \mathcal{CN}(0,\sigma^2)$ and:\\
\begin{equation}
\label{eq:alpha_i_tilde}
 \Tilde{\alpha}_i(\bar{a})=\alpha_i x_{\bar{a}}[M - k_i]=\alpha_i e^{-2j\pi
   k_i \frac{\bar{a}}{M}} x_0[M - k_i]
\end{equation}
with:
\begin{equation}
\overline{a} \triangleq 
\begin{cases}
  a^- & \text{ for } k = 0,1,\ldots,k_i-1 \hfill\text{ (previous symbol)} \\
  a & \text{ for } k = k_i,\ldots,M-1 \hfill \text{ (current symbol).}
\end{cases}
\label{eq:bar_a}
\end{equation}
\end{prop}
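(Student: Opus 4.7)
The plan is to start from the channel output and carefully split the multi-path contribution into current and previous symbol pieces, then absorb the dechirp factor into each piece via a quadratic-phase algebraic identity.

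First I would write out $r_a[k]$ explicitly. By \eqref{eqyk} and linearity of the convolution, the samples that fall inside the current symbol interval read
\begin{equation*}
  r_a[k]=\sum_{i=0}^{K-1}\alpha_i\,x_{\bar a}\!\bigl[(k-k_i)\bmod M\bigr]+w[k],\qquad k=0,\dots,M-1,
\end{equation*}
where $\bar a=a$ when $k\ge k_i$ (the $i$-th echo still carries the current symbol) and $\bar a=a^-$ when $k<k_i$ (the $i$-th echo still carries the tail of the previous symbol). This matches the piecewise definition \eqref{eq:bar_a}. The term $i=0$ with $k_0=0$ just contributes $\alpha_0 x_a[k]$.

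Next I would multiply by the conjugate down-chirp $x_0^*[k]$. For the leading path, a direct application of \eqref{eqxa} gives $x_0^*[k]\,x_a[k]=e^{2j\pi k a/M}$, which produces the first term of \eqref{eqrak2_IES}. For the echoes $i\ge 1$ I would use the clean factorization $x_{\bar a}[n]=x_0[n]\,e^{2j\pi n\bar a/M}$ (an immediate consequence of \eqref{eqxa}) together with the $M$-periodicity of $x_0[\cdot]$ on the integers (which follows from \eqref{eqxa} because the extra phase accrued by $k\mapsto k+M$ is $e^{2j\pi(k+M/2)}=1$ for integer $k$). This lets me drop the modulo and work with $x_0[k-k_i]$ directly. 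The core identity I then need is
\begin{equation*}
  x_0^*[k]\,x_0[k-k_i]=x_0[M-k_i]\,e^{-2j\pi k k_i/M},
\end{equation*}
which I would verify by comparing the quadratic phases on both sides and collapsing the residual $e^{j\pi k_i}$, $e^{-j\pi k_i}$ factors using $k_i\in\mathbb{Z}$. Combined with the $e^{2j\pi(k-k_i)\bar a/M}$ phase from the factorization and the abbreviation $\tilde\alpha_i(\bar a)=\alpha_i e^{-2j\pi k_i\bar a/M}x_0[M-k_i]$ in \eqref{eq:alpha_i_tilde}, every echo takes precisely the form $\tilde\alpha_i(\bar a)\,e^{2j\pi k(\bar a-k_i)/M}$ claimed in \eqref{eqrak2_IES}.

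Finally, I would handle the noise: since $|x_0^*[k]|=1$, the transformation $w[k]\mapsto \tilde w[k]=x_0^*[k]w[k]$ is unitary sample-wise and the circularly symmetric complex Gaussian distribution is preserved, so $\tilde w[k]\sim\mathcal{CN}(0,\sigma^2)$. The main obstacle I foresee is the algebraic identity in the previous paragraph: one has to track the quadratic phase $k^2/(2M)$ through the shift $k\mapsto k-k_i$ carefully, and use the integrality of $k_i$ (as well as of $\bar a$, through $e^{2j\pi\bar a}=1$) to absorb all leftover half-integer phases into the term $x_0[M-k_i]$. Once this identity is in place, the rest of the proposition reduces to reading off the three pieces—useful path, interference sum, and noise—from the expansion.
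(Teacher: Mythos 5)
Your proposal is correct and follows essentially the same route as the paper: split each echo inside the detection interval into previous-symbol and current-symbol pieces (the piecewise $\bar a$), use the $M$-periodicity/negative-index property of the chirp to drop the modulo, and absorb the dechirp factor through the quadratic-phase algebra, which yields exactly $\tilde\alpha_i(\bar a)e^{2j\pi k(\bar a-k_i)/M}$ plus the unit-modulus rotation of the noise. The only difference is organizational — you treat general $K$ at once via the explicit identity $x_0^*[k]x_0[k-k_i]=x_0[M-k_i]e^{-2j\pi kk_i/M}$, whereas the paper works out the two-path case inline and then states the generalization — so the two arguments are substantively the same.
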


\begin{figure}[t]
  \centering
  \tikzset{every picture/.style={line width=0.75pt}} 

\begin{tikzpicture}[x=0.75pt,y=0.75pt,yscale=-0.8,xscale=0.8]

\draw    (95.5,36) -- (425.5,36) ;
\draw    (95.5,65) -- (425.5,65) ;
\draw    (130,50) -- (130,80) ;
\draw    (240,20) -- (240,40) ;
\draw  [color={rgb, 255:red, 0; green, 0; blue, 255 }  ,draw opacity=1 ][fill={rgb, 255:red, 0; green, 0; blue, 255 }  ,fill opacity=1 ] (240,30) -- (370,30) -- (370,40) -- (240,40) -- cycle ;
\draw  [color={rgb, 255:red, 0; green, 0; blue, 255 }  ,draw opacity=1 ][fill={rgb, 255:red, 0; green, 0; blue, 255 }  ,fill opacity=1 ] (260,60) -- (390,60) -- (390,70) -- (260,70) -- cycle ;
\draw  [color={rgb, 255:red, 255; green, 0; blue, 0 }  ,draw opacity=1 ][fill={rgb, 255:red, 255; green, 0; blue, 0 }  ,fill opacity=1 ] (110,30) -- (240,30) -- (240,40) -- (110,40) -- cycle ;
\draw  [color={rgb, 255:red, 255; green, 0; blue, 0 }  ,draw opacity=1 ][fill={rgb, 255:red, 255; green, 0; blue, 0 }  ,fill opacity=1 ] (130,60) -- (260,60) -- (260,70) -- (130,70) -- cycle ;
\draw    (260,50) -- (260,80) ;
\draw    (390,50) -- (390,80) ;
\draw    (110,20) -- (110,50) ;
\draw    (370,20) -- (370,50) ;

\draw (6.5,25.4) node [anchor=north west][inner sep=0.75pt]    {$\alpha _{0} s[ k]$};
\draw (6.5,55.4) node [anchor=north west][inner sep=0.75pt]    {$\alpha _{1} s[ k-k_{1}]$};
\draw (251,80.4) node [anchor=north west][inner sep=0.75pt]    {$k_{1}$};
\draw (237,2.4) node [anchor=north west][inner sep=0.75pt]    {$0$};
\draw (346.5,7.4) node [anchor=north west][inner sep=0.75pt]    {$M-1$};
\draw (290,11.9) node [anchor=north west][inner sep=0.75pt]  [color={rgb, 255:red, 0; green, 0; blue, 255 }  ,opacity=1 ]  {$a$};
\draw (170,11.4) node [anchor=north west][inner sep=0.75pt]  [color={rgb, 255:red, 255; green, 0; blue, 0 }  ,opacity=1 ]  {$a^{-}$};
\draw (245,99) node [anchor=north west][inner sep=0.75pt]   [align=left] {detection of the\\[-.4em]current symbol $\displaystyle a$};
\draw (311,42.4) node [anchor=north west][inner sep=0.75pt]  [color={rgb, 255:red, 0; green, 0; blue, 255 }  ,opacity=1 ]  {$a$};
\draw (190,41.4) node [anchor=north west][inner sep=0.75pt]  [color={rgb, 255:red, 255; green, 0; blue, 0 }  ,opacity=1 ]  {$a^{-}$};
\draw  [dash pattern={on 4.5pt off 4.5pt}]  (240,30) -- (240,135) ;
\draw  [dash pattern={on 4.5pt off 4.5pt}]  (370,30) -- (370,135) ;
\end{tikzpicture}
  \caption{Illustration of $ISI$ for detecting the current symbol $a$ in case of two-path channel at delays $k_0=0$ (synchronized on the first path) and $k_1$.}
\label{fig:ISI}
\end{figure}
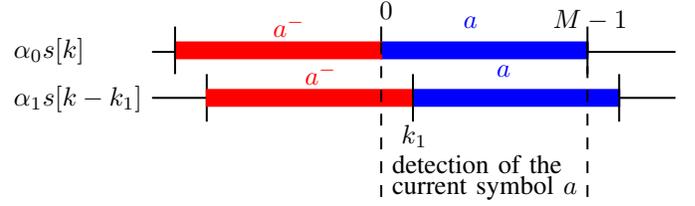

\begin{proof}
For the sake of simplicity, we first consider the two-path channel.  
The received signal is then $r[k]=\alpha_0 s[k]+\alpha_1 s[k-k_1] + w[k]$.
By focusing in the detection interval $k=0,\ldots,M-1$ of the current symbol $a$ (Figure \ref{fig:ISI}), the signal on the synchronized path is equal to $s[k]=x_a[k]$ and the signal on the delayed path can be expressed as:
\begin{equation}
  s[k-k_1]=\left\{
\begin{array}{ll}
x_{a^{-}}[M-k_1+k] &\text{for } k=0,\ldots,k_1-1\\
x_{a}[k-k_1] &\text{for } k=k_1,\ldots,M-1.
\end{array}
\right.
\end{equation}

From \eqref{eqxa} one can verify the property $x_a[M-n]=x_a[-n]$ for $n=0,\ldots,M-1$, then the received signal for the detection of the current symbol $a$ could be expressed as:
\begin{equation}
r_a[k]=\alpha_0x_a[k]+\alpha_1x_{\bar{a}}[k-k_1] + w[k]
\label{eq:r_a[k]}
\end{equation}
where $\bar{a}$ is defined in (\ref{eq:bar_a}).
By substituting (\ref{eqxa}) into (\ref{eq:r_a[k]}) yields:
\begin{multline}
  r_a[k]=\alpha_0x_a[k]+\alpha_1e^{j2\pi(\frac{\bar{a}}{M}-\frac{1}{2}+\frac{n-k_1}{2M}) }
e^{-j2\pi k_1(\frac{\bar{a}}{M}-\frac{1}{2}+\frac{k-k_1}{2M})}\\ + w[k].
\end{multline}

By multiplying the \emph{down-chirp} to $r_a[k]$ we obtain after some basic manipulations:
\setlength\arraycolsep{2pt}
\begin{eqnarray}
\Tilde{r}_a[k]&=&x_0^*[k]r_a[k] \nonumber\\
&=&\alpha_0e^{j2\pi k\frac{a}{M}}+\alpha_1e^{-j2\pi
  k_1(\frac{\bar{a}}{M}-\frac{1}{2}-\frac{k_1}{2M})}e^{j2\pi
  n\frac{\bar{a}-k_1}{M}} + \Tilde{w}[k] \nonumber\\[-1em]
&& \nonumber\\
&=&\alpha_0e^{j2\pi
  k\frac{a}{M}}+\alpha_1x_{\bar{a}}[-k_1]e^{j2\pi n\frac{\bar{a}-k_1}{M}} + \Tilde{w}[k] \nonumber \\
&=&\alpha_0e^{j2\pi
  k\frac{a}{M}}+\Tilde{\alpha_1}(\bar{a})e^{j2\pi n\frac{\bar{a}-k_1}{M}} + \Tilde{w}[k]
\end{eqnarray}
with $\Tilde{\alpha_1}(\bar{a})=\alpha_1x_{\bar{a}}[-k_1]=\alpha_1x_{\bar{a}}[M-k_1]$.
By applying the same development for $K>2$ paths, the general expression is
straightforward and given in (\ref{eqrak2_IES}).
\end{proof}

\subsection{$DFT$ of the received down-chirping LoRa signal}

The second operation in the demodulation stage is to compute the $DFT$ of $\Tilde{r}_a[k]$ and select the discrete frequency index that maximizes the $DFT$ magnitude for non-coherent detection, or $DFT$ real part for coherent detection.

{\prop The $DFT$ $\Tilde{R}_a[n]$ of $\Tilde{r}_a[k]$ assuming self-$ISI$ (\textit{i.e.} $a^-=a$) is:
\begin{equation}
\Tilde{R}_a[n] = M \alpha_0 \delta[n - a] + M \sum_{i=1}^{K-1}
    \Tilde{\alpha}_i(a) \delta[n - a + k_i] + \Tilde{W}[n]
    \label{eq:TFD1}
\end{equation}
for $n = 0,1,\ldots,M-1$ and $\Tilde{W}[n] \sim \mathcal{CN}(0,M\sigma^2 = \sigma_w^2$).
}
\begin{proof}
For $a^-=a$ (the previous symbol $a^-$ is equal to the current symbol $a$) and thanks to the property $x_a[-k]=x_a[M-k]$, channel effect could be seen as a circular convolution and from \eqref{eq:r_a[k]} we obtain directly the $DFT$ given in~(\ref{eq:TFD1}).
\end{proof}

In case of self-$ISI$ ($a^-=a$), the $DFT$ output \eqref{eq:TFD1} is particularly simple and exhibits peaks at index frequencies $n=a-k_i$ for $i=0,\ldots,K-1$.
The desired symbol $a$ is located at $n=a$ ($k_0=0$) but additional interference peaks appear due to multiple echoes.
Note that even if the self-$ISI$ case occurs only with the probability $1/M$ for \textit{i.i.d.} symbols, we have to consider this case to derive theoretical performance results presented in Section~\ref{sec::SEP}.
Now, let us focusing on the more general $ISI$ case.

{\prop The $DFT$ $\Tilde{R}_a[n]$ of $\Tilde{r}_a[k]$ in presence of $ISI$ (\textit{i.e.} $a^-\neq a$) is:
\begin{equation}
\label{eqRak2_IES}
\begin{split}
     & \Tilde{R}_a[n] = M \alpha_0 \delta[n - a] \\
     & + \sum_{i=1}^{K-1} \left\{ (M - k_i) \Tilde{\alpha}_i(a) + M_i[a-k_i;a^-] \right\} \delta[n - a + k_i] \\
     & + \left\{ M_i[n;a^-] - M_i[n;a] \right\} (1 - \delta[n - a +k_i]) + \Tilde{W}[n]
     \end{split}
\end{equation}
with:
\begin{equation}
        M_i[n;\bar{a}] = \Tilde{\alpha}_i(\bar{a}) \sum_{k=0}^{k_i - 1} e^{2j\pi k/M(\bar{a} - k_i - n)}
    \label{eqMi}
\end{equation}
where $n$ is the frequency index, $\bar{a}=a$ or $\bar{a}=a^-$, and $\Tilde{\alpha}_i(\bar{a})$ is given in \eqref{eq:alpha_i_tilde}.
}
\begin{proof}
See appendix \ref{sec:outDFT}.
\end{proof}

We may note that $ISI$ reduces interference peaks even more when path delay is significant.
The peak magnitude is indeed $(M-k_i) \Tilde{\alpha}_i(a)$ (dominant term) for $n=a-k_i$ in (\ref{eqRak2_IES}) is lesser than $M \Tilde{\alpha}_i(a)$ in \eqref{eq:TFD1}.
Performance is then improved for longer path delay (\textit{e.g.} $k_1 = 11$ versus $k_1=1$).
Simulations in sections \ref{subsec:valid} and \ref{sec:degradation} will highlight this point.
In contrast with \eqref{eq:TFD1}, we observe also parasitic peaks everywhere outside the index frequencies $n=a-k_i$.
However, we will further see that these peaks vanish in presence of noise.
Note that for $a=a^-$ \eqref{eqRak2_IES} can be reduced to (\ref{eq:TFD1}).
Indeed, the third term in \eqref{eqRak2_IES} cancels out and, in the second term, $M_i[a-k_i;a^-]$ equals $\Tilde{\alpha}_i(a)k_i$.

\section{LoRa  \textcolor{black}{$SER$} under multi-path channel}\label{sec::SEP}

This part presents the derivation of  \textcolor{black}{$SER$} noted as $P_e$.
We will extend the method derived in \cite{ferre1} for the one-path channel to the case of $MPC$.

\subsection{General expressions of  \textcolor{black}{$SER$} for non-coherent and coherent detection schemes}\label{sec:globalSEP} 

As seen in previous part, peaks magnitude at $DFT$ output are different for the cases $a=a^-$ and $a\neq a^-$.
Hence different $P_e$ expressions for those two cases need to be derived.
Let us denote hypothesis $H_a$, the current symbol is $a$, and $H_{a^-}$, the previous symbol is $a^-$.
According to the law of total probability, $P_e$ is expressed as:
\begin{equation}
    P_e = \sum_{a,a^-=0}^{M-1} \mathbb{P}[\widehat{a} \ne a/H_a,H_{a^-}]\, \mathbb{P}[H_a,H_{a^-}].
    \label{eqPeBayes}
\end{equation}

By separating the terms in \eqref{eqPeBayes} for $a=a^-$ and $a \neq a^-$ and for \textit{i.i.d.} symbols (\textit{i.e.} $\mathbb{P}[H_a,H_{a^-}] = \frac{1}{M^2}$), $P_e$ leads to:
\begin{eqnarray}
        P_e &=& \frac{1}{M^2} \sum_{a=0}^{M-1} \mathbb{P}[\widehat{a} \ne a/H_a,H_{a^-=a}] \label{eqPe1} \\
        && + \frac{1}{M^2} \sum_{\substack{a,a^-=0 \\ a \neq
            a^-}}^{M-1}\mathbb{P}[\widehat{a} \neq a/H_a,H_{a^- \neq a}].
    \label{eqPe}
\end{eqnarray}

As we will see further  (in section~\ref{subsec:Pd}) for non-coherent detection scheme, $\mathbb{P}[\widehat{a}\ne a/H_a,H_{a^-=a}]=P_e^{(1)}$ does not depend on the transmitted symbol $a$ but $\mathbb{P}[\widehat{a} \ne a/H_a,H_{a^- \neq a}]={P}_e^{(2)}(a,a^-)$ depends on $a$ and $a^-$ for $a \neq a^-$.
However, we'll consider $\widehat{P}_e^{(2)}$ as a good approximation of $P_e^{(2)}(a,a^-)$ where $\widehat{P}_e^{(2)}$ does not depend on $a$ and 
$a^-$, $\forall a \neq a^-$.
Then, (\ref{eqPe1}) and (\ref{eqPe}) can be simplified as:
\begin{equation}
    \begin{split}
    P_e \simeq \frac{1}{M} P_e^{(1)} + \frac{M-1}{M} \widehat{P}_e^{(2)}.
    \end{split}
    \label{eqPeApprox}
\end{equation}

Otherwise, for the coherent detection scheme we will further see that $\mathbb{P}[\widehat{a} \ne a/H_a,H_{a^-=a}]=P_e^{(1)}(a)$ which depends on $a$, and the second term $P_e^{(2)}(a,a^-)$ could be approximated by $\widehat{P}_e^{(2)}(a)$ which depends only on $a$.  We obtain for the coherent case:
\begin{equation}
      P_e \simeq \frac{1}{M^2} \sum_{a=0}^{M-1} \left(P_e^{(1)}(a) + (M-1)\widehat{P}_e^{(2)}(a)\right).
    \label{eqPeApprox2}
\end{equation}

Unfortunately, the computational complexity of $P_e$ is $M$ times greater for the coherent case than the non-coherent case.
However, the numerical evaluation of \eqref{eqPeApprox2}, even for $M=2^{12}=4096$, doesn't make any computing difficulty.

The evaluation of $P_e$ depends on the magnitude for non-coherent detection (or the real part for coherent detection) of the DFT output
$\Tilde{R}_a[n]$ $\forall a$, with $\Tilde{R}_a[n]$ given in \eqref{eq:TFD1} and \eqref{eqRak2_IES} for $n=0,1,\ldots,M-1$.

As the  \textcolor{black}{$SER$} $P_e^{(c)}$ (for $c=1,\,2$) could be directly derived from the probability of detection of the correct symbol $a$ at the $DFT$ output, we first compute this probability for a given noise random sample $\Tilde{W}[a]$ (at the correct $a$-peak frequency index).

\subsection{Probability of detection of the correct symbol $a$ for a given noise random sample at the $a$-peak index}\label{subsec:Pd}

From \eqref{eqRak2_IES} in the $ISI$ case ($a^-\neq a$), several approximations can be made to $\Tilde{R}_a[n]$ regarding $n$ values:
\begin{itemize}
    \item {for $n = a$}
    \begin{IEEEeqnarray}{rCl}
    \Tilde{R}_a[a]&=&M\alpha_0+\underbrace{\sum_{i=1}^{K-1}(M_i[a;a^-]-M_i[a;a])}_{I\approx0}+\Tilde{W}[a]\nonumber\\[-1.5em]
    &&\IEEEyesnumber\label{eq:Ya1_ISI}\\
     \Tilde{R}_a[a]&\approx&M\alpha_0+\Tilde{W}[a],\IEEEyessubnumber\label{eq:Ya1_ISI_Approx}
    \end{IEEEeqnarray}
\item {for $n = a - k_i$} 
 \begin{IEEEeqnarray}{rCl}
    \Tilde{R}_a[a\!-\!k_i]&=&(M\!-\!k_i)\Tilde{\alpha}_i(a)\!+\!\underbrace{M_i[a\!-\!k_i;a^-]}_{\approx 0}+\Tilde{W}[a\!-\!k_i]\nonumber\\[-1em]
    &&\IEEEyesnumber\label{eq:Ya2_ISI}\\
    \Tilde{R}_a[a\!-\!k_i]&\approx&(M\!-\!k_i)\Tilde{\alpha}_i(a)+\Tilde{W}[a\!-\!k_i],
\IEEEyessubnumber\label{eq:Ya2_ISI_Approx}
  \end{IEEEeqnarray}
    \item {for $n \ne a$ and $n \ne a - k_i$}
    \begin{IEEEeqnarray}{rCl}
     \Tilde{R}_a[n]&=&\underbrace{\sum_{i=1}^{K-1}(M_i[n;a^-]-M_i[n;a])}_{J\approx0}+\Tilde{W}[n]\IEEEyesnumber\label{eq:Ya3_ISI}\\[-.5em]
       \Tilde{R}_a[n]&\approx&\Tilde{W}[n].\IEEEyessubnumber\label{eq:Ya3_ISI_Approx}
     \end{IEEEeqnarray}
\end{itemize}

Merging Eqs. \eqref{eq:Ya1_ISI_Approx}-\eqref{eq:Ya2_ISI_Approx}-\eqref{eq:Ya3_ISI_Approx} yields finally:
\begin{equation}
    \begin{split}
     & \Tilde{R}_a[n] \;\textcolor{black}{\approx}\; M \alpha_0 \delta[n - a] \\
     & + \sum_{i=1}^{K-1} (M - k_i) \Tilde{\alpha}_i(a) \delta[n - a + k_i] + \Tilde{W}[n].
     \end{split}
     \label{eq:Ya_ISI_approx}
\end{equation}

The approximation $\widehat{P}_e^{(2)} \approx P_e^{(2)}(a,a^-)$ (non-coherent case) or $\widehat{P}_e^{(2)}(a)\approx P_e^{(2)}(a,a^-)$ (coherent case) described in section~\ref{sec:globalSEP} supposes that the quantities $I,J$ and $M_i[a-k_i;a^-]$ are negligible.
These quantities depend on $M_i[\,\textbf{.}\,;a]$ and $M_i[\,\textbf{.}\,;a^-]$ that are indeed non-coherent complex exponential sums.
These quantities vanish in comparison to the noise $\Tilde{W}[n]$ because the noise standard deviation $\sqrt{M}\sigma$ is much larger, in particularly for the useful low-$SNR$ range.
Moreover, one can verify that $M\alpha_0\gg I$ in \eqref{eq:Ya1_ISI} and $(M-k_i)\Tilde{\alpha}_i(a)\gg M_i[a-k_i;a^-]$ in~\eqref{eq:Ya2_ISI}.
Note that for the special case $n=a^--k_i$ that could occur in \eqref{eq:Ya3_ISI}, $M_i[n,a^-]$ doesn't correspond to a sum of non-coherent complex exponential terms and is equal 
to $k_i \Tilde{\alpha}_i(a^-)$. However, this term remains vanish because $k_i\ll M$ (small $ISI$-depth in comparison to the symbol duration) and is small in comparison to the correct $a$-peak detection of amplitude $M$.

From \eqref{eq:TFD1} in the self-$ISI$ case ($a^-=a$), $\Tilde{R}_a[n]$ yields:
\begin{itemize}
    \item {for $n = a$}
      \begin{equation}
    \Tilde{R}_a[a] = M\alpha_0 + \Tilde{W}[a],
      \end{equation}    
    \item {for $n = a - k_i$}
      \begin{equation}
    \Tilde{R}_a[a-k_i] = M \Tilde{\alpha}_i(a) + \Tilde{W}[a-k_i],
    \label{eq:Ya2_2}
      \end{equation}
    \item {for $n \ne a$ and $n \ne a - k_i$}
      \begin{equation}
      \Tilde{R}_a[n] = \Tilde{W}[n].
      \end{equation}      
\end{itemize}

Notice that thanks to the previous simplifications in the $ISI$ case, the difference between $ISI$ and self-$ISI$ cases appears only at $n=a-k_i$ in \eqref{eq:Ya2_ISI_Approx} and \eqref{eq:Ya2_2} where the term $(M-k_i) \Tilde{\alpha}_i(a)$ must be considered for $a^-\neq a$ instead of $M \Tilde{\alpha}_i(a)$ for $a^-=a$.
The term $\Tilde{\alpha}_i(a)$ is given in \eqref{eq:alpha_i_tilde}.
A comparison between \eqref{eqRak2_IES} and \eqref{eq:Ya_ISI_approx} expressions for $ISI$ case is presented in Figure \ref{fig:lora_magnitude_mpc} without the noise term $\Tilde{W}[n]$ for two different symbols $a$ and $a^-$, and a two-path channel.
As seen in the figure, the approximated expression is very close to the exact expression.
This approximation depends however on $a^-$ and $a$ for a given $k_i$ and the deviation may thus vary.

\textcolor{black}{Figure \ref{fig:lora_magnitude_mpc_non_aligned} shows the $DFT$ output of the dechirped signal in the case of a non-aligned channel \textit{i.e.} $k_i$ a non-integer value formed by integer and fractional parts of sampling period, denoted as $L_i = \left \lfloor{k_i}\right \rceil$ and $\eta_i = k_i - L_i$, respectively.
As the receiver is synchronized on the first received path, $k_0 = 0$.
We consider in the figure a two-path channel with $\alpha_0=1$, $\alpha_1=0.7$ and different echo delay values $k_1 = \{4,10.25,20.5,30.75\}$, $SF=7$.
For figure clarity, we consider self-$ISI$ case ($a^- = a = 80$), each echoes are plotted with different colors and are clearly separated to each other.
An oversampling factor of $R=8$ is used to simulate $\eta$.
We may see that the non-aligned echoes have their $DFT$ energy bin at $n = a - L_1$ spread over neighbor bins.
The spread increases as $\eta$ grows and is maximum when $\eta=\pm0.5$.
$\delta_{k_1}$ denotes in the figure the magnitude difference between the peak of interest at $n=a=80$ and the echo at $n = a - L_1$.
The reported values are $\delta_{k_1=4} \approx 37.2, \delta_{k_1=10.25=30.75} \approx 47.5 \mbox{ and } \delta_{k_1=20.5} \approx 71.3$.
Lower $\delta_{k_1}$ values increase sensitivity of the detector to the noise.
We expect then that the aligned channel will reduce performance as the entire echo energy is contained in a single $DFT$ bin and will be more harmful for the detector.
}

\begin{figure}[ht]
  \centering
  \includegraphics[width=0.49\textwidth]{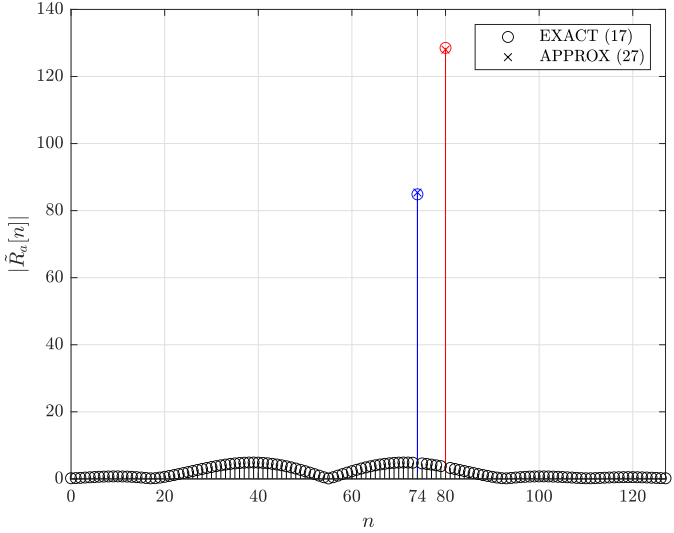}
  \caption{Illustration of demodulated LoRa symbol (non-coherent detection) for $SF=7$,   $a=80$, $a^-=42$ and two-path channel with
 $k_1=6$, $\alpha_0=1$ and $\alpha_1=0.7$.
  }
\label{fig:lora_magnitude_mpc}
\end{figure}

\begin{figure}[ht]
  \centering
  \includegraphics[width=0.49\textwidth]{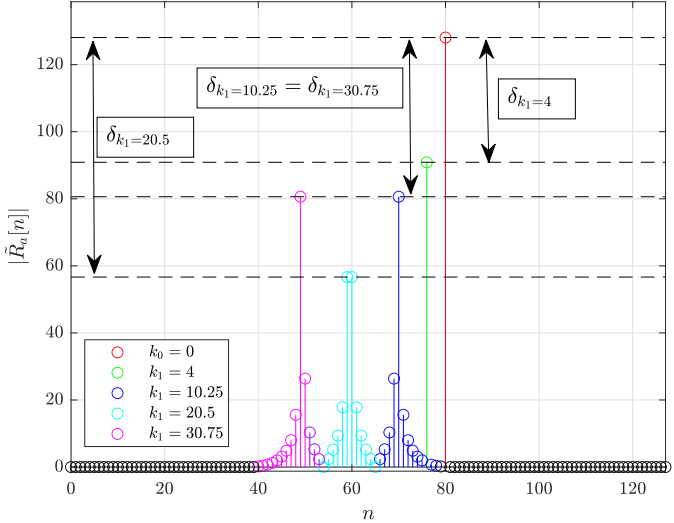}
  \caption{Illustration of demodulated LoRa symbol (non-coherent detection) with two-path non-aligned channel for $SF=7$, $a^- = a = 80$, $\alpha_0=1$, $\alpha_1=0.7$ and $k_1 = \{4,10.25,20.5,30.75\}$.
  }
\label{fig:lora_magnitude_mpc_non_aligned}
\end{figure}

\textcolor{black}{We point out that LoRa uses coded symbols in practice \textit{i.e.} using Gray and Hamming coding with Code Rate ($CR$) ranging from 4/5 to 4/8.
Only $CR = 4/7, 4/8$ can correct one bit per codeword.
Moreover, Gray coding implies that adjacent symbols differ only from one bit and thanks to the interleaving scheme used in practice \cite{burg}, LoRa off-by-one errors will be corrected statistically more often.
That is, applied to $MPC$ case, performance depends on the location of parasitic peaks \textit{i.e.} $k_i$ values.
When $k_i = 1$ (a situation that may be frequently encountered in practice) and with sufficiently high associated magnitude, the detected peak may be located at $n = a-1$ frequency index in low $SNR$ conditions, leading to an off-by-one error and thus improving the error correction capacity of the decoder.}

The correct symbol detection probability given $\Tilde{W}[a]$ can be expressed in term of  \textcolor{black}{$SER$} as:
\begin{equation}
        P_{d/W}^{(c)} = 1 - \mathbb{P}[\widehat{a} \ne a/H_a,H_{a^-},\Tilde{W}[a]]
    \label{eqPd1}
\end{equation}
where $c=1$ for self-$ISI$ case ($a^-=a$), and $c=2$ for the general $ISI$ case ($a^- \neq a$).
For the non-coherent detection scheme, a correct detection of $a$ must satisfy
the two following conditions on the $DFT$ magnitude (or squared $DFT$ magnitude):
\begin{itemize}
    \item The $K-1$ echoes located at $a-k_i$ ($i=1,\ldots,K-1$) must have a
      lower peak magnitude than the $a$-peak magnitude,
    \item Noise samples located at $n \neq \{a,\,a-k_i\}$ must have a lower peak
      magnitude than the $a$-peak magnitude.
\end{itemize}

This leads to:
\begin{equation}
  \begin{split}
  P_{d/W}^{(c)}&=\PP\left[|\Tilde{R}_a[a-k_i]|^2<|\Tilde{R}_a[a]|^2, \hspace{1em} i\in\{1,...,K-1\}\right.\\[-.2em]
  &\hspace{-2.4em}\text{and }\hfill\left.|\Tilde{R}_a[n]|^2<|\Tilde{R}_a[a]|^2,\hspace{1em} n\neq\{a,a-k_1,\ldots,a-k_{K-1}\}\right]
  \end{split}
\end{equation}

Independent events lead to:
\begin{equation}
  \begin{split}
    P_{d/W}^{(c)} &= \prod_{i=1}^{K-1} \mathbb{P}\left[ \left| d_i^{(c)} + \Tilde{W}[a-k_i]\right|^2 < \left|M\alpha_0 + \Tilde{W}[a]\right|^2\right] \\
        & \quad \prod_{\substack{n=0 \\ n \ne a \\ n \ne a-k_i
}}^{M-1} \mathbb{P}\left[ \left|\Tilde{W}[n]\right|^2 < \left|M\alpha_0 + \Tilde{W}[a]\right|^2 \right] 
 \end{split}
\end{equation}
where:
\begin{equation}
    d_i^{(c)} = 
    \begin{cases}
    M\Tilde{\alpha}_i(a)  & c = 1 \qquad\text{(self-$ISI$)}\\
    (M-k_i) \Tilde{\alpha}_i(a)  & c = 2 \qquad\text{($ISI$)}.
    \end{cases}
\label{eqdc}
\end{equation}

As noise samples are zero-mean circular Gaussian random variables, $|\Tilde{W}[n]|^2$ follows a centered chi-square $\chi^2$ distribution  with 2 degrees of freedom, and $|d_i^{(c)} + \Tilde{W}[a-k_i]|^2$ follows a non-centered chi-square $\chi^2_{NC}$ distribution with 2 degrees of freedom.
The non-centrality parameter $\lambda_i^{(c)}$ is:
\begin{equation}
    \begin{split}
        \lambda_i^{(c)} =
        \begin{cases}
        \frac{2M |\alpha_i|^2}{\sigma^2} & c=1 \qquad\text{(self-$ISI$)}\\
        \frac{2(M-k_i)^2 |\alpha_i|^2}{M \sigma^2} & c=2\qquad\text{($ISI$).}
        \end{cases}
    \end{split}
    \label{eqLambdaic}
\end{equation}

The probability of detection given $\Tilde{W}[a]$ for $ISI$ and self-$ISI$ cases can be expressed in terms of $CDF$s, this leads to:
\begin{equation}
  \begin{split}
P_{d/W}^{(c)} &=\prod_{i=1}^{K-1} F_{\chi^2_{NC}} \left( \frac{|M\alpha_0 + \Tilde{W}[a]|^2}{M \sigma_{\Re}^2} ; \lambda_i^{(c)} \right) \\
        & \qquad F_{\chi^2} \left( \frac{|M\alpha_0 + \Tilde{W}[a]|^2}{M \sigma_{\Re}^2} \right)^{M-K}.
\end{split} \label{eqPd1DetailAbs2}
\end{equation}

\textcolor{black}{Notice that \eqref{eqPd1DetailAbs2} and \eqref{eqLambdaic} do not depend on $a^-$ (thanks to the approximations (\ref{eq:Ya1_ISI_Approx}), (\ref{eq:Ya2_ISI_Approx}) and (\ref{eq:Ya3_ISI_Approx})), but they do not depend on $a$ either, then $\mathbb{P}[\widehat{a} \ne a/H_a,H_{a^-=a}]$ and $\mathbb{P}[\widehat{a} \ne a/H_a,H_{a^-\neq a}]$  do not depend on $a$ and $a^-$  (see (\ref{eqPd1})), which explains the simplification from \eqref{eqPe1}-\eqref{eqPe} to \eqref{eqPeApprox} for computing~$P_e$.
}

If we consider the $DFT$ magnitude instead of the squared $DFT$ magnitude, $P_{d/W}^{(c)}$ becomes:
\begin{equation}
    \begin{split}
 P_{d/W}^{(c)}  &  = \prod_{i=1}^{K-1} F_{Ri} \left(|M\alpha_0 + \Tilde{W}[a]| ; v_i^{(c)} ; \sigma_{\Re}\sqrt{M} \right) \\
        & \qquad F_{Ri} \left(|M\alpha_0 + \Tilde{W}[a]| ; 0 ; \sigma_{\Re}\sqrt{M}  \right)^{M-K}
    \end{split}
    \label{eqPd1DetailAbs}
\end{equation}
with $F_{Ri}(.,v,\sigma)$ the Ricean $CDF$ with non-centrality parameter $v$ and deviation $\sigma$.
$|\Tilde{W}[n]|$ follows a centered Ricean distribution (or Rayleigh distribution) and $|d_i^{(c)} + \Tilde{W}[a-k_i]|$ follows a non-centered Ricean distribution with non-centrality parameter $v_i^{(c)}$:
\begin{equation}
    \begin{split}
       v_i^{(c)} = 
       \begin{cases}
        M|\alpha_i| & c = 1\qquad\text{(self-$ISI$)}\\
        (M-k_i)|\alpha_i| & c = 2\qquad\text{($ISI$).}
    \end{cases}
    \end{split}
    \label{eqvji}
\end{equation}

Note that the results in \eqref{eqPd1DetailAbs2} and \eqref{eqPd1DetailAbs} are strictly equivalent.

Finally, if we consider the coherent detection scheme, $P_{d/W}^{(c)}$ becomes:
\begin{equation}
    \begin{split}
  P_{d/W}^{(c)} =& \prod_{i=1}^{K-1} F_{\mathcal{N}} \left( \Re \{ M\alpha_0 + \Tilde{W}[a]\}; \Re \{d_i^{(c)}\}; \sigma_{\Re}\sqrt{M}  \right) \\
        & \qquad F_{\mathcal{N}} \left( \Re \{ M\alpha_0 + \Tilde{W}[a]\}; 0 ;\sigma_{\Re} \sqrt{M}  \right)^{M-K}
    \end{split}
    \label{eqPd1DetailNormal}
\end{equation}

with $F_{\mathcal{N}}(.,\mu,\sigma)$ the normal $CDF$ with $\mu$-mean and $\sigma$-standard deviation parameters.
One may see that $\Re\{d_i^{(c)}\}$ depends on the transmitted symbol $a$.
We must use \eqref{eqPeApprox2} in order to compute $P_e$.
Each possible symbol $a$ must be taken into account, but this increases the computational complexity by a factor of $M$ in \eqref{eqPeApprox2} in comparison to \eqref{eqPeApprox} used for the non-coherent case.

\subsection{Expectation evaluation  via Gauss-Hermite integration}

Equations \eqref{eqPd1DetailAbs2}, \eqref{eqPd1DetailAbs} and \eqref{eqPd1DetailNormal} are used here for only one noise realisation $\Tilde{W}[a]$.
To obtain $P_e^{(c)}$, a mathematical expectation over $\Tilde{W}[a]$ should be performed:
\begin{equation}
        P_e^{(c)} = E[g^{(c)}(w)] 
    \label{eqPei}
\end{equation}
where $g^{(c)}(w)=1 - P_{d/W}^{(c)}(w)$ with $P^{(c)}_{d/W}(w)$ given by \eqref{eqPd1DetailAbs2} or \eqref{eqPd1DetailAbs} for the non-coherent case (and \eqref{eqPd1DetailNormal} for the coherent case) by replacing $\Tilde{W}[a]$ by $w$.
Note that to simplify notation, the hat symbol over $P_e$ is omitted in \eqref{eqPei} for $c=2$ (see \eqref{eqPeApprox} or \eqref{eqPeApprox2}).
In \cite{ferre1}, the authors propose to estimate \eqref{eqPei} by using a Monte-Carlo approach.
We propose instead to use the Gauss-Hermite procedure \cite{numerical_recipes} to efficiently compute numerically integral \eqref{eqPei}.
We obtain:
\begin{equation}
    \begin{split}
    P_e^{(c)} & = \frac{1}{\pi \sigma_w^2} \int_{\mathbb{C}} g^{(c)}(w) e^{- \frac{|w|^2}{\sigma_w^2}} dw \\
        & = \frac{1}{\pi} \int_{\mathbb{C}} g^{(c)}(\sigma_w w) e^{-|w|^2} dw \\
        & \approx \frac{1}{\pi} \sum_{n,m=1}^{N} g^{(c)}\left( \sigma \sqrt{M} (w_n + jw_m) \right) p_n p_m
    \end{split}
    \label{eqPeiDetail}
\end{equation}
where $w_i$ and $p_i$ for $i=1,\ldots N$ are respectively the nodes (abscissa) and weights of the $N$-points Gauss-Hermite quadrature rules.
To properly compute \eqref{eqPeiDetail}, $N$ must be sufficiently large (\textit{e.g.} $N=15$).

\section{LoRa user interference}\label{sec:lora_interf}

In this section, we derive based on previous developments a closed-form expression of LoRa  \textcolor{black}{$SER$} in the case of two LoRa users colliding in $AWGN$ channel.

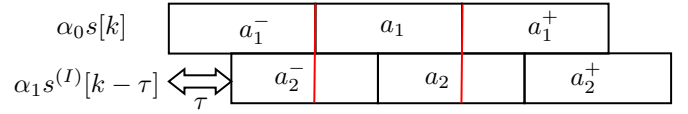
\begin{figure}[ht]
    \centering
   \begin{tikzpicture}[x=0.75pt,y=0.75pt,yscale=-0.8,xscale=0.83]
        
        \draw   (122,79.67) -- (211.07,79.67) -- (211.07,111.16) -- (122,111.16) -- cycle ;
        \draw   (211.07,79.67) -- (300.14,79.67) -- (300.14,111.16) -- (211.07,111.16) -- cycle ;
        \draw   (300.14,79.67) -- (389.21,79.67) -- (389.21,111.16) -- (300.14,111.16) -- cycle ;
        \draw   (160,111) -- (249.07,111) -- (249.07,142.49) -- (160,142.49) -- cycle ;
        \draw   (249.07,111) -- (338.14,111) -- (338.14,142.49) -- (249.07,142.49) -- cycle ;
        \draw   (338.14,111) -- (427.21,111) -- (427.21,142.49) -- (338.14,142.49) -- cycle ;
        \draw [color={rgb, 255:red, 255; green, 0; blue, 0 }  ,draw opacity=1 ]   (211.07,79.67) -- (210.4,141.82) ;
        \draw [color={rgb, 255:red, 255; green, 0; blue, 0 }  ,draw opacity=1 ]   (300.47,80.08) -- (299.81,142.23) ;
        \draw   (122,128.74) -- (131.5,121) -- (131.5,124.87) -- (150.5,124.87) -- (150.5,121) -- (160,128.74) -- (150.5,136.49) -- (150.5,132.62) -- (131.5,132.62) -- (131.5,136.49) -- cycle ;
           
        \draw (248.67,88.73) node [anchor=north west][inner sep=0.75pt]    {$a_1$};
        \draw (275.67,121.07) node [anchor=north west][inner sep=0.75pt]    {$a_2$};
        \draw (163,84.4) node [anchor=north west][inner sep=0.75pt]    {$a^{-}_{1}$};
        \draw (184.33,114.4) node [anchor=north west][inner sep=0.75pt]    {$a^{-}_{2}$};
        \draw (338.33,82.4) node [anchor=north west][inner sep=0.75pt]    {$a^{+}_{1}$};
        \draw (364.33,115.07) node [anchor=north west][inner sep=0.75pt]    {$a^{+}_{2}$};
        \draw (135.33,137.07) node [anchor=north west][inner sep=0.75pt]    {$\tau$};
        \draw (52,84.73) node [anchor=north west][inner sep=0.75pt]    {$\alpha_0 s[k]$};
        \draw (26,118.73) node [anchor=north west][inner sep=0.75pt]    {$\alpha_1 s^{(I)}[ k-\tau ]$};
    \end{tikzpicture}
    \caption{LoRa user interference illustration.
    $ISI$ with symbols $a_2^-$ and $a_2$ from the interfer signal $s^{(I)}[k]$ appears in the detection interval of the desired current symbol~$a_1$.
}
    \label{fig:LoRaInterf}
\end{figure}

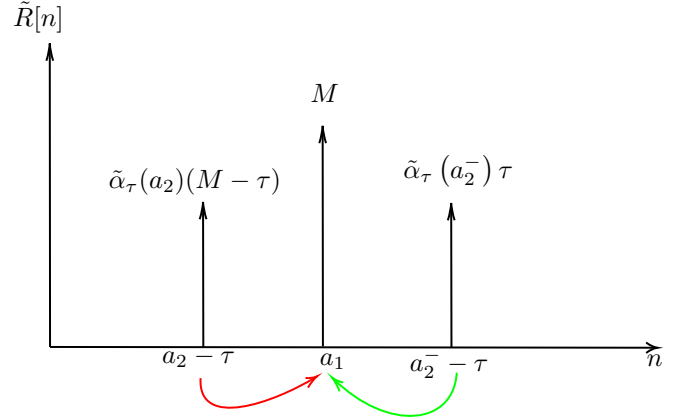
\begin{figure}[!t]
    \centering
        \begin{tikzpicture}[x=0.75pt,y=0.75pt,yscale=-0.8,xscale=0.52]
        \draw    (40.67,220) -- (627.74,220.49) ;
        \draw [shift={(629.74,220.49)}, rotate = 180.05] [color={rgb, 255:red, 0; green, 0; blue, 0 }  ][line width=0.75]    (10.93,-3.29) .. controls (6.95,-1.4) and (3.31,-0.3) .. (0,0) .. controls (3.31,0.3) and (6.95,1.4) .. (10.93,3.29)   ;
        \draw    (305.33,219.67) -- (305.07,82.49) ;
        \draw [shift={(305.07,80.49)}, rotate = 449.89] [color={rgb, 255:red, 0; green, 0; blue, 0 }  ][line width=0.75]    (10.93,-3.29) .. controls (6.95,-1.4) and (3.31,-0.3) .. (0,0) .. controls (3.31,0.3) and (6.95,1.4) .. (10.93,3.29)   ;
        \draw    (189.33,220) -- (189.08,130.49) ;
        \draw [shift={(189.07,128.49)}, rotate = 449.84] [color={rgb, 255:red, 0; green, 0; blue, 0 }  ][line width=0.75]    (10.93,-3.29) .. controls (6.95,-1.4) and (3.31,-0.3) .. (0,0) .. controls (3.31,0.3) and (6.95,1.4) .. (10.93,3.29)   ;
        \draw    (430,220.67) -- (429.74,131.16) ;
        \draw [shift={(429.74,129.16)}, rotate = 449.84] [color={rgb, 255:red, 0; green, 0; blue, 0 }  ][line width=0.75]    (10.93,-3.29) .. controls (6.95,-1.4) and (3.31,-0.3) .. (0,0) .. controls (3.31,0.3) and (6.95,1.4) .. (10.93,3.29)   ;
        \draw    (40.67,220) -- (40.41,30.49) ;
        \draw [shift={(40.4,28.49)}, rotate = 449.92] [color={rgb, 255:red, 0; green, 0; blue, 0 }  ][line width=0.75]    (10.93,-3.29) .. controls (6.95,-1.4) and (3.31,-0.3) .. (0,0) .. controls (3.31,0.3) and (6.95,1.4) .. (10.93,3.29)   ;
        \draw [color={rgb, 255:red, 255; green, 0; blue, 0 }  ,draw opacity=1 ]   (186.67,239.67) .. controls (183.78,272.65) and (258.12,255.04) .. (297.94,239.21) ;
        \draw [shift={(299.74,238.49)}, rotate = 517.86] [color={rgb, 255:red, 255; green, 0; blue, 0 }  ,draw opacity=1 ][line width=0.75]    (10.93,-3.29) .. controls (6.95,-1.4) and (3.31,-0.3) .. (0,0) .. controls (3.31,0.3) and (6.95,1.4) .. (10.93,3.29)   ;
        \draw [color={rgb, 255:red, 0; green, 255; blue, 0 }  ,draw opacity=1 ]   (435.11,236.24) .. controls (432.21,269.39) and (369.38,274.15) .. (315.73,240.28) ;
        \draw [shift={(314.11,239.24)}, rotate = 392.95] [color={rgb, 255:red, 0; green, 255; blue, 0 }  ,draw opacity=1 ][line width=0.75]    (10.93,-3.29) .. controls (6.95,-1.4) and (3.31,-0.3) .. (0,0) .. controls (3.31,0.3) and (6.95,1.4) .. (10.93,3.29)   ;
         
        \draw (616.67,223.07) node [anchor=north west][inner sep=0.75pt]    {$n$};
        \draw (299.33,222.07) node [anchor=north west][inner sep=0.75pt]    {$a_{1}$};
        \draw (146.67,218.73) node [anchor=north west][inner sep=0.75pt]    {$a_{2} -\tau $};
        \draw (386.67,219.07) node [anchor=north west][inner sep=0.75pt]    {$a^{-}_{2} -\tau $};
        \draw (290.33,52.4) node [anchor=north west][inner sep=0.75pt]    {$M$};
        \draw (94.67,105.07) node [anchor=north west][inner sep=0.75pt]    {$\Tilde{\alpha} _{\tau }( a_{2})( M-\tau )$};
        \draw (380.67,97.07) node [anchor=north west][inner sep=0.75pt]    {$\Tilde{\alpha} _{\tau }\left( a^{-}_{2}\right) \tau$};
        \draw (2,0) node [anchor=north west][inner sep=0.75pt]    {$\tilde{R}[n]$};
    \end{tikzpicture}  
    \caption{LoRa $DFT$ for $a_2^- \ne a_2$.
    Colored arrows indicate  \textcolor{black}{additive interference} $A_2$ and $A_3$ cases.}
    \label{fig:LoRaInterfFFTNeq}
\end{figure}

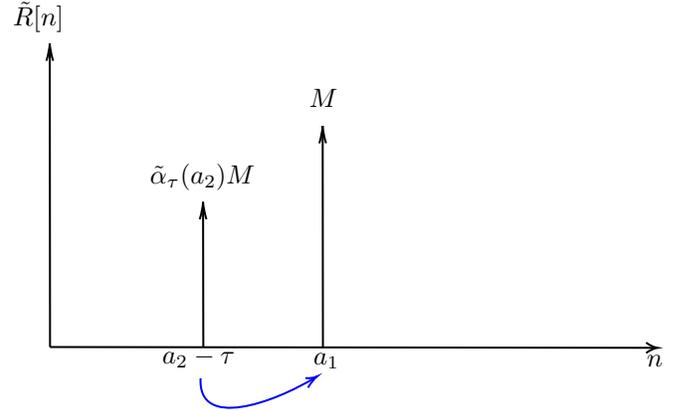
\begin{figure}[!t]
    \centering
      \begin{tikzpicture}[x=0.75pt,y=0.75pt,yscale=-0.8,xscale=0.52]
        
        \draw    (40.67,220) -- (627.74,220.49) ;
        \draw [shift={(629.74,220.49)}, rotate = 180.05] [color={rgb, 255:red, 0; green, 0; blue, 0 }  ][line width=0.75]    (10.93,-3.29) .. controls (6.95,-1.4) and (3.31,-0.3) .. (0,0) .. controls (3.31,0.3) and (6.95,1.4) .. (10.93,3.29)   ;
        \draw    (305.33,219.67) -- (305.07,82.49) ;
        \draw [shift={(305.07,80.49)}, rotate = 449.89] [color={rgb, 255:red, 0; green, 0; blue, 0 }  ][line width=0.75]    (10.93,-3.29) .. controls (6.95,-1.4) and (3.31,-0.3) .. (0,0) .. controls (3.31,0.3) and (6.95,1.4) .. (10.93,3.29)   ;
        \draw    (189.33,220) -- (189.08,130.49) ;
        \draw [shift={(189.07,128.49)}, rotate = 449.84] [color={rgb, 255:red, 0; green, 0; blue, 0 }  ][line width=0.75]    (10.93,-3.29) .. controls (6.95,-1.4) and (3.31,-0.3) .. (0,0) .. controls (3.31,0.3) and (6.95,1.4) .. (10.93,3.29)   ;
        \draw    (40.67,220) -- (40.41,30.49) ;
        \draw [shift={(40.4,28.49)}, rotate = 449.92] [color={rgb, 255:red, 0; green, 0; blue, 0 }  ][line width=0.75]    (10.93,-3.29) .. controls (6.95,-1.4) and (3.31,-0.3) .. (0,0) .. controls (3.31,0.3) and (6.95,1.4) .. (10.93,3.29)   ;
        \draw [color={rgb, 255:red, 0; green, 0; blue, 255 }  ,draw opacity=1 ]   (186.67,239.67) .. controls (183.78,272.65) and (258.12,255.04) .. (297.94,239.21) ;
        \draw [shift={(299.74,238.49)}, rotate = 517.86] [color={rgb, 255:red, 0; green, 0; blue, 255 }  ,draw opacity=1 ][line width=0.75]    (10.93,-3.29) .. controls (6.95,-1.4) and (3.31,-0.3) .. (0,0) .. controls (3.31,0.3) and (6.95,1.4) .. (10.93,3.29)   ;
        
        \draw (616.67,223.07) node [anchor=north west][inner sep=0.75pt]    {$n$};
        \draw (293.33,222.07) node [anchor=north west][inner sep=0.75pt]    {$a_{1}$};
        \draw (146.67,218.73) node [anchor=north west][inner sep=0.75pt]    {$a_{2} -\tau $};
        \draw (289.33,55.4) node [anchor=north west][inner sep=0.75pt]    {$M$};
        \draw (134.67,102.73) node [anchor=north west][inner sep=0.75pt]    {$\Tilde{\alpha} _{\tau }( a_{2}) M$};
        \draw (2,0) node [anchor=north west][inner sep=0.75pt]    {$\tilde{R}[ n]$};
    \end{tikzpicture}
    \caption{LoRa $DFT$ for $a_2^- = a_2$.
    Arrow indicates the  \textcolor{black}{additive interference} $B_2$ case.}
    \label{fig:LoRaInterfFFTeq}
\end{figure}

\begin{figure*}[!b]
\normalsize
\setcounter{mytempeqncnt}{\value{equation}}
\hrulefill
\begin{equation}\label{eqRk2_IES}
    \begin{split}
    \Tilde{R}[n] & = M \delta[n - a_1] +\left\{ M_{\tau}[n;a_2^-] - M_{\tau}[n;a_2] \right\} (1-\delta[n - a_2 + \tau]) (1-\delta[n - a_2^- + \tau])\\
     &+ \left\{ (M - \tau) \Tilde{\alpha}_{\tau}(a_2) + M_{\tau}[a_2-\tau;a_2^-] \right\} \delta[n - a_2 + \tau]
     + \left\{ \tau \Tilde{\alpha}_{\tau}(a_2^-) - M_{\tau}[a_2^--\tau;a_2] \right\}\delta[n - a_2^- + \tau]
     \end{split}
\end{equation}
\setcounter{equation}{\value{mytempeqncnt}}
\end{figure*}

\subsection{Interference impact on $DFT$}

The model is slightly different from the $MPC$ model and is presented in Figure~\ref{fig:LoRaInterf}.
On the contrary of $MPC$, the delay of the interferer is not constrained to small values and is equally spread over the symbol duration \textit{i.e.} $\tau \in \{0,1,\ldots,M-1\}$.
By analogy with $MPC$ model, the user interference model corresponds to: $K=2$, with path delays $k_0=0$ and $k_1=\tau$, and path gains $\alpha_0=1$ and $\alpha_1\bydef{\alpha_{\tau}}=\sqrt{P_I}e^{j\phi}$.
Without loss of generality, we have considered normalized path gains over $\alpha_0$, then $\phi$ corresponds to the phase difference between second and first paths \textcolor{black}{and is considered uniformly distributed over $[0,2\pi[$}.
The interferer signal power is set with respect to Signal-to-Interference Ratio ($SIR$): $SIR = 1/P_I$.
The only difference between the two models appears in the $ISI$ term: $[a_1^-,a_1]$ (with $a_1=a$ the current symbol detection) for $MPC$ model while it corresponds to $[a_2^-,a_2]$ (with $a_2=a_1$ or $a_2\neq a_1$) for the interference model.
This difference with the interference symbol (\emph{i.e.} $a_2\neq a_1$) will exhibit more complexity in the  \textcolor{black}{$SER$} derivation for the LoRa user interference.
\textcolor{black}{The non-aligned interference case \textit{i.e.} $\tau$ a real value in $[0,M-1]$ will have similar effect as non-aligned $MPC$ (see Figure \ref{fig:lora_magnitude_mpc_non_aligned}).}

With similar manipulations as seen previously the received LoRa signal after dechirp process and $DFT$ is expressed as \eqref{eqRk2_IES} with:
\stepcounter{equation}
\begin{equation}
        M_{\tau}[n;a] = \Tilde{\alpha}_{\tau}(a) \sum_{k=0}^{\tau - 1} e^{2j\pi k/M(a - \tau - n)}
    \label{eqMtau_prec}
\end{equation}
 and:
 \begin{equation}
 \label{eq:alphat interf}
 \Tilde{\alpha}_{\tau}(a)=\sqrt{P_I}e^{j\phi}e^{-2j\pi
   \tau \frac{{a}}{M}} x_0[M - \tau].
\end{equation}

Note that in comparison with \eqref{eqRak2_IES}, \eqref{eqRk2_IES} exhibits the new peak amplitude $\tau \Tilde{\alpha}_{\tau}(a_2^-)$ at $n=a_2^- -\tau$ because for $\tau$ large enough the latter quantity isn't vanishing in comparison to $M$ at the correct $a_1$-peak.
For $a_2^-=a_2$, (\ref{eqRk2_IES}) simplifies to:
\begin{equation}
\label{eq:RnInterf2}
\Tilde{R}[n] = M \delta[n - a_1] + M  \Tilde{\alpha}_{\tau}(a_2) \delta[n - a_2 + \tau].
\end{equation}

Considering the special case $\tau=0$ to \eqref{eq:RnInterf2} yields:
\begin{equation}
    \Tilde{R}[n] = M \delta[n - a_1] + M  \alpha_{\tau} \delta[n - a_2].
\end{equation}

Depending on $a_1$, $a_2^-$ and $a_2$ values, different situations are possible as depicted in Figures \ref{fig:LoRaInterfFFTNeq} and \ref{fig:LoRaInterfFFTeq}, leading to the following five cases:

\begin{itemize}
\item $A_i$ cases in Figure \ref{fig:LoRaInterfFFTNeq} $(a_2^- \neq a_2)$:
\begin{itemize}
    \item $A_1 \triangleq \{a_2^- \neq a_2$, $a_2-\tau \neq a_1$ and $a_2^- - \tau \neq a_1\}$
    \item $A_2 \triangleq \{a_2^- \neq a_2$, $a_2-\tau = a_1$ and $a_2^- - \tau \neq a_1\}$\\ $\rightarrow$  \textcolor{black}{additive interference}
    \item $A_3 \triangleq \{a_2^- \neq a_2$, $a_2-\tau \neq a_1$ and $a_2^- - \tau = a_1\}$\\ $\rightarrow$  \textcolor{black}{additive interference}
    \end{itemize}
\item $B_i$ cases  in Figure~\ref{fig:LoRaInterfFFTeq}  $(a_2^- = a_2)$:
\begin{itemize} 
    \item $B_1 \triangleq \{a_2^- = a_2$ and $a_2-\tau \neq a_1\}$ 
    \item $B_2 \triangleq \{a_2^- = a_2$ and $a_2-\tau = a_1$\}\\ $\rightarrow$  \textcolor{black}{additive interference}
\end{itemize}
\end{itemize}

\textcolor{black}{The additive interference illustrated in Figure~\ref{fig:LoRaInterfFFTNeq}  for $(a_2^- \neq a_2)$  and Figure~\ref{fig:LoRaInterfFFTeq}  $(a_2^- = a_2)$ may be constructive or destructive depending on $\phi$ in \eqref{eq:alphat interf} and performance will be then improved or reduced.}

By considering $M_{\tau}[n^{\prime};a^{\prime}]$ terms in \eqref{eqRk2_IES} vanishing, the approximated  \textcolor{black}{$SER$} can be derived as a similar way as for the $MPC$ model.
However  \textcolor{black}{additive interference} cases are new and have to be considered.

\subsection{Approximated expressions of  \textcolor{black}{$SER$} under LoRa interference for non-coherent detection scheme}

It is straightforward to see that the approximated  \textcolor{black}{$SER$}, noted $ P_e^{(I)}$, under LoRa interference depends on $a_1$, $a_2^-$, $a_2$ symbol values and especially if  \textcolor{black}{additive interference} is present.
Similarly to \eqref{eqPeBayes}, $P_e^{(I)}$ is expressed as:
\begin{equation}
        P_e^{(I)} = \frac{1}{M^3} \sum_{a_1,a_2^-,a_2=0}^{M-1} \mathbb{P}[\widehat{a}_1 \ne a_1/H_{a_1},H_{a_2^-},H_{a_2}].
    \label{eqPeBayesInterf}
\end{equation}

Decomposing \eqref{eqPeBayesInterf} to the different cases aforementioned gives:
\begin{equation}
        P_e^{(I)} = P_e^{(A)} + P_e^{(B)}
    \label{eqPeBayesInterfDecomp}
\end{equation}
with:
\begin{equation}
    \begin{split}
        M^3 P_e^{(A)} = {} & M(M-1)(M-2) P_e^{(A_1)} \\
        &+ (M-1) \sum_{a_1=0}^{M-1} P_e^{(A_2)}(a_1) \\
        &+ (M-1) \sum_{a_1=0}^{M-1} P_e^{(A_3)}(a_1) 
    \end{split}
    \label{eqPeAInterf}
\end{equation}
and:
\begin{equation}
    \begin{split}
        M^3 P_e^{(B)} =& M(M-1) P_e^{(B_1)} + \sum_{a_1=0}^{M-1} P_e^{(B_2)}(a_1).
    \end{split}
    \label{eqPeBInterf}
\end{equation}

The assumption  $M_{\tau}[n^{\prime};a^{\prime}]\approx0$ allows us to avoid the nested summations over $a_2$ and $a_2^-$, which reduces drastically the complexity of $P_e^{(I)}$.
However the interference model brings more complexity in comparison with $MPC$ model because the peak-value at $a_1$ is \textit{reinforced} for $a_1=a_2-\tau$ or $a_1=a_2^--\tau$ and a summation over $a_1$ is required in $A_2$, $A_3$ and $B_2$ cases as seen in \eqref{eqPeAInterf} and \eqref{eqPeBInterf}.
For example, for the $B_2$ case, the peak-value at $a_1$ is equal to $M$ for $a_1\neq a_2-\tau$ while for $a_1=a_2-\tau$ it is equal to $M\tilde{\alpha}_{\tau}(a_1+\tau)+M$, which depends on $a_1$.

For $\tau=0$, only the $B_i$ cases have to be considered but the peak-value at $a_1$ of the \textcolor{black}{additive interference} $B_2$ case is $M(1+\alpha_{\tau})$, which does not depend on $a_1$. Therefore $P_e^{(B_2)}(a_1)=P_e^{(B_2)}$ and the  \textcolor{black}{$SER$} for $\tau=0$ is:
\begin{equation}
     M P_e^{(I)} = (M-1) P_e^{(B_1)} + P_e^{(B_2)}.
\end{equation}

By following the same development used for $MPC$, the $g^{(c)}$-functions in \eqref{eqPeiDetail} used to derive  \textcolor{black}{$SER$} are obtained via:\\[-1.5em]
\begin{itemize}
    \item $g^{(C_i)}(\sigma\sqrt{M}w)=1-P_{d/W}^{(C_i)}(\sigma\sqrt{M}w)$\\ for $C_i=\{A_1, B_2\}$,\\[-.5em]
    \item $g^{(C_i)}(a_1,\sigma\sqrt{M}w)=1-P_{d/W}^{(C_i)}(a_1,\sigma\sqrt{M}w)$\\ for  $C_i=\{A_2, A_3, B_2\}$,
\end{itemize}
 with $P_{d/W}^{(C_i)}(\sigma\sqrt{M}w)$ or $P_{d/W}^{(C_i)}(a_1,\sigma\sqrt{M}w)$ equals to $P_{d/W}^{(C_i)}$ given in \eqref{eqPdA1}-\eqref{eqPdB2}.
 Note that for $C_i=\{A_2, A_3, B_2\}$ the $g^{(c)}$-functions depend on $a_1$ for the evaluation of $P_e^{(C_i)}(a_1)$.

The detection probabilities $P_{d/W}^{(C_i)}$ used in the  $g^{(c)}$-functions for non coherent detection scheme are summarized below:
\setlength\arraycolsep{0.1pt}
\begin{IEEEeqnarray}{rCl}
P_{d/W}^{(A_1)} &=& F_{\chi_{NC}^2} \left( d^{} ; \lambda_{\tau}^{(1)} \right) F_{\chi_{NC}^2} \left( d^{} ; \lambda_{\tau}^{(2)} \right)  F_{\chi^2} \left( d^{} \right)^{M-3}\hspace{1.5em}\IEEEyesnumber\label{eqPdA1}\\
P_{d/W}^{(A_2)} &=& F_{\chi_{NC}^2} \left( d_{\tau}^{(A_2)} ; \lambda_{\tau}^{(2)} \right) F_{\chi^2} \left( d_{\tau}^{(A_2)} \right)^{M-2}\IEEEyessubnumber \label{eqPdA2} \\
P_{d/W}^{(A_3)} &=& F_{\chi_{NC}^2} \left( d_{\tau}^{(A_3)} ; \lambda_{\tau}^{(1)} \right) F_{\chi^2} \left( d_{\tau}^{(A_3)} \right)^{M-2} \IEEEyessubnumber\label{eqPdA3} \\
P_{d/W}^{(B_1)} &=& F_{\chi_{NC}^2} \left( d ; \lambda_{}^{(3)} \right) F_{\chi^2} \left( d \right)^{M-2} \IEEEyessubnumber\label{eqPdB1} \\
P_{d/W}^{(B_2)} &=& F_{\chi^2} \left( d_{\tau}^{(B_2)}\right)^{M-1}
\IEEEyessubnumber\label{eqPdB2}
\end{IEEEeqnarray}
with:\vspace{-1em}
\begin{IEEEeqnarray}{rCl}
    d &=& \frac{2|\sqrt{M} + w\sigma|^2}{\sigma^2}\IEEEyesnumber \label{eqd} \\
    d_{\tau}^{(A_2)} &=& \frac{2|\sqrt{M} + \sqrt{\frac{(M-\tau)^2}{M}}\Tilde{\alpha}_{\tau}(a_1+\tau) + w\sigma|^2}{\sigma^2}\IEEEyessubnumber \label{eqdrenfg} \\
    d_{\tau}^{(A_3)} &=& \frac{2|\sqrt{M} + \sqrt{\frac{\tau^2}{M}}\Tilde{\alpha}_{\tau}(a_1+\tau) + w\sigma|^2}{\sigma^2}\IEEEyessubnumber \label{eqdrenfd} \\
    d_{\tau}^{(B_2)} &=& \frac{2|\sqrt{M} + \sqrt{M}\Tilde{\alpha}_{\tau}(a_1+\tau) + w\sigma|^2}{\sigma^2}\IEEEyessubnumber \label{eqdrenfgeq}
\end{IEEEeqnarray}
and:\vspace{-1em}
\begin{IEEEeqnarray}{rCl}
    \lambda_{\tau}^{(1)} &=& \frac{2(M-\tau)^2 P_I}{M \sigma^2}\IEEEyesnumber \label{eqlambdatau1} \\
    \lambda_{\tau}^{(2)} &=& \frac{2\tau^2 P_I}{M \sigma^2}\IEEEyessubnumber \label{eqlambdatau2} \\
    \lambda_{}^{(3)} &=& \frac{2M P_I}{\sigma^2}.\IEEEyessubnumber \label{eqlambdatau3}
\end{IEEEeqnarray}

Note that for the special case $\tau=0$, $d_{\tau}^{(B_2)}=d_{0}^{(B_2)}=2|\sqrt{M}(1+{\alpha}_{\tau}) + w\sigma|^2/\sigma^2$.

Computational complexity of $P_e^{(I)}$ for $\tau\neq0$ comes from the summations over all the possible symbols $a_1$ in \eqref{eqPeAInterf} and \eqref{eqPeBInterf}. 
However for even $\tau$ values, these summations can be significantly reduced as shown in  proposition \ref{prop:complexity}.

\begin{prop}
\label{prop:complexity}
Complexity of performance evaluation in (\ref{eqPeBayesInterfDecomp}) can be reduced by a factor of $2^n$ in case of delay $\tau$ in the following form: $\tau=k2^n$ with $k$ an odd number and $n$ an integer greater than~0 ($n\in\{1,\ldots,SF-1\}$).
\end{prop}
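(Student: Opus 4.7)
The plan is to exploit the periodicity of $\tilde{\alpha}_\tau(a_1+\tau)$ as a function of $a_1$, since this quantity is the only way $a_1$ enters the summands of (\ref{eqPeAInterf}) and (\ref{eqPeBInterf}). From \eqref{eqdrenfg}--\eqref{eqdrenfgeq}, the functions $P_{d/W}^{(A_2)}$, $P_{d/W}^{(A_3)}$ and $P_{d/W}^{(B_2)}$ depend on $a_1$ \emph{only through} $\tilde{\alpha}_\tau(a_1+\tau)$, and by \eqref{eq:alphat interf} we have
\begin{equation}
\tilde{\alpha}_\tau(a_1+\tau)=\sqrt{P_I}\,e^{j\phi}\,x_0[M-\tau]\,e^{-2j\pi\tau^2/M}\,e^{-2j\pi\tau a_1/M},
\end{equation}
so the entire $a_1$-dependence is carried by the complex exponential $e^{-2j\pi\tau a_1/M}$.

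Next I would substitute $\tau=k\,2^n$ with $k$ odd and $M=2^{SF}$, which gives
\begin{equation}
e^{-2j\pi\tau a_1/M}=e^{-2j\pi k a_1/2^{SF-n}}.
\end{equation}
Because $k$ is odd and the denominator is a power of two, $\gcd(k,2^{SF-n})=1$, so the map $a_1\mapsto e^{-2j\pi k a_1/2^{SF-n}}$ has exact period $2^{SF-n}=M/2^n$. Consequently $\tilde{\alpha}_\tau(a_1+\tau)$ is periodic in $a_1$ with period $M/2^n$, and each of its $M/2^n$ distinct values is attained exactly $2^n$ times as $a_1$ runs over $\{0,1,\ldots,M-1\}$.

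I would then fold this periodicity into the sums in \eqref{eqPeAInterf} and \eqref{eqPeBInterf}. For each additive-interference case $C_i\in\{A_2,A_3,B_2\}$, the Gauss--Hermite evaluation of $P_e^{(C_i)}(a_1)$ via $g^{(C_i)}(a_1,\sigma\sqrt{M}w)$ depends on $a_1$ only through $\tilde{\alpha}_\tau(a_1+\tau)$, hence
\begin{equation}
\sum_{a_1=0}^{M-1}P_e^{(C_i)}(a_1)\;=\;2^n\sum_{a_1=0}^{M/2^n-1}P_e^{(C_i)}(a_1),
\end{equation}
so only $M/2^n$ evaluations of the quadrature are needed instead of $M$, reducing the cost of each sum by the claimed factor $2^n$. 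The term $P_e^{(A_1)}$ and $P_e^{(B_1)}$ have no $a_1$-dependence and are unaffected, confirming that the overall evaluation of $P_e^{(I)}$ in \eqref{eqPeBayesInterfDecomp} is accelerated by the same factor.

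The only delicate point is the coprimality argument: I must insist that $k$ be odd in the decomposition $\tau=k\,2^n$ so that $\gcd(k,2^{SF-n})=1$ and the period is exactly $M/2^n$ (if $k$ were even, one could absorb a further factor of $2$ into $n$). Everything else is a direct substitution, so I expect no real obstacle; the result is essentially a statement about the $2$-adic valuation of $\tau$ modulo $M=2^{SF}$.
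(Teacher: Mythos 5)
Your proposal is correct and follows essentially the same route as the paper: the $a_1$-dependence of $P_{d/W}^{(A_2)}$, $P_{d/W}^{(A_3)}$, $P_{d/W}^{(B_2)}$ sits entirely in $\Tilde{\alpha}_{\tau}(a_1+\tau)$, and substituting $\tau=k2^n$ with $k$ odd shows the exponential takes only $M_1=M/2^n$ distinct values, each repeated $2^n$ times, so the sums over $a_1$ collapse by a factor of $2^n$. Your explicit coprimality/periodicity argument (and factoring out the constant phase $e^{-2j\pi\tau^2/M}$) is just a cleaner way of stating the paper's ``circular shift'' footnote; there is no substantive difference.
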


\begin{proof}
To reduce complexity we can consider only the distinct values with respect to LoRa symbol $a_1$ to perform the summations $P_e^{(C_i)}(a)$ with $C_i=\{A_2,A_3,B_2\}$ in \eqref{eqPeAInterf} and \eqref{eqPeBInterf}.
$P_e^{(C_i)}(a)$ depends directly on $P_{d/M}^{(C_i)}$ where LoRa symbol $a_1$ appears in $d_{\tau}^{(C_i)}$ in \eqref{eqdrenfg}, \eqref{eqdrenfd} and \eqref{eqdrenfgeq}.
By replacing $\tau=k2^n$ into the complex exponential term of $\Tilde{\alpha}_{\tau}(a_1+\tau)$ we obtain $\Tilde{\alpha}_{\tau}(a_2) = \sqrt{P_I}e^{j\phi}e^{-2j\pi k a_2/M_1} x_0[M-\tau]$ with $M_1=M/2^n$ and $a_2=a_1+\tau$.
For $k$ odd, we obtain $M_1$ distinct values when $a_2$ is varying from 0 to $M_1-1$.\footnote{Note that for $\Tilde{\alpha}_{\tau}(a_1+\tau)$, the set of symbols is shifted by $\tau$ but it leads to the same set of values.
It just produces a circular shift.}
Doing all the set of symbols $\{0,1,,M-1\}$ leads to $2^n$ times the same values obtained with $\{0,1,,M_1-1\}$.
Hence we can replace $\sum_{a_1=0}^{M-1} P_e^{(C_i)}(a_1)$ by $2^n\sum_{a_1=0}^{M_1-1} P_e^{(C_i)}(a_1)$ with $M_1=M/2^n=2^{SF-n}$.
The complexity is then reduced by a factor of $2^n$.
For example, for $\tau=3\times 2^5=96$ and $SF=7$, the summations reduce to $2^{SF}/2^5=4$ terms (\textit{i.e.} $\sum_{a=0}^{M-1}P_e^{(C_i)}(a)=2^{SF-2}\sum_{a=0}^3 P_e^{(C_i)}(a)$ for $C_i=\{A_2,A_3,B_2\}$).
\end{proof}

From the theoretical approximated  \textcolor{black}{$SER$}, we can derive interesting results about $\tau$- and $\phi$-influence on performance.

\subsubsection{Influence of $\tau$}
 \textcolor{black}{$SER$} performance in terms of $\tau$ is symmetric about the axis $\tau=M/2$ as shown in proposition~\ref{prop:sym}.

\begin{prop}
\label{prop:sym}
The approximated  \textcolor{black}{$SER$} expression $P_e^{(I)}$ in \eqref{eqPeBayesInterfDecomp} is invariant by changing $\tau$ by $M-\tau$.
\end{prop}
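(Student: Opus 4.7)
The plan is to track how each building block of (\ref{eqPeBayesInterfDecomp})--(\ref{eqPeBInterf}) transforms under the involution $\tau\mapsto M-\tau$, using a single chirp identity together with a relabeling of the $a_1$-summation. First, I would record the symmetry $x_0[M-\tau]=x_0[\tau]$, which follows by direct substitution in (\ref{eqxa}) (equivalently, from the property $x_a[M-n]=x_a[-n]$ stated just after (\ref{eqxa}), together with $e^{j\pi\tau}=e^{-j\pi\tau}$ for integer $\tau$). Plugging this into (\ref{eq:alphat interf}) and using $e^{-2j\pi Ma/M}=1$ yields the central identity $\tilde{\alpha}_{M-\tau}(a)=\tilde{\alpha}_\tau(-a)$ (arguments modulo $M$). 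In parallel, (\ref{eqlambdatau1})--(\ref{eqlambdatau3}) give immediately $\lambda_{M-\tau}^{(1)}=\lambda_\tau^{(2)}$, $\lambda_{M-\tau}^{(2)}=\lambda_\tau^{(1)}$, and $\lambda^{(3)}$ is $\tau$-independent.

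The $\tau$-free or symmetric pieces are then immediate: $d$ in (\ref{eqd}) does not involve $\tau$, so $P_e^{(A_1)}$ from (\ref{eqPdA1}) is preserved because the product of the two non-central CDFs is symmetric under the swap $\lambda_\tau^{(1)}\leftrightarrow\lambda_\tau^{(2)}$, and $P_e^{(B_1)}$ from (\ref{eqPdB1}) depends on $\tau$ only through $\lambda^{(3)}$. The work concentrates on the $a_1$-dependent sums. Under $\tau\mapsto M-\tau$ the scalar factors $(M-\tau)^2/M$ and $\tau^2/M$ in (\ref{eqdrenfg})--(\ref{eqdrenfd}) interchange, and by the central identity $\tilde{\alpha}_{M-\tau}(a_1+M-\tau)=\tilde{\alpha}_\tau(\tau-a_1)$. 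Setting $a_1':=-a_1\bmod M$ gives $\tilde{\alpha}_\tau(\tau-a_1)=\tilde{\alpha}_\tau(a_1'+\tau)$, and since $a_1\mapsto a_1'$ is a bijection on $\{0,\ldots,M-1\}$ and the Gauss--Hermite expectation in (\ref{eqPeiDetail}) is decoupled from $a_1$, one obtains, after the swap, that the $A_2$-sum evaluated at $M-\tau$ equals the $A_3$-sum at $\tau$ and conversely, so the two $A$-sums in (\ref{eqPeAInterf}) simply exchange their roles. The same relabeling applied to (\ref{eqdrenfgeq}) preserves the $B_2$-sum in (\ref{eqPeBInterf}), leaving $P_e^{(I)}$ unchanged overall.

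The main obstacle I anticipate is bookkeeping: making sure that, after the swap $\tau\mapsto M-\tau$ followed by the symbol relabeling $a_1\mapsto -a_1\bmod M$, not only the arguments of $\tilde{\alpha}_\tau$ but also the untouched pieces $\sqrt{M}$ and $w\sigma$ inside the squared modulus in (\ref{eqd})--(\ref{eqdrenfgeq}) agree term by term, so that the full $F_{\chi^2}$ and $F_{\chi^2_{NC}}$ integrands in (\ref{eqPdA1})--(\ref{eqPdB2}) truly coincide pointwise in $w$. Once this finite-set permutation argument is spelled out cleanly, proposition \ref{prop:sym} follows without any further analysis.
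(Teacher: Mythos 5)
Your proposal is correct and follows essentially the same route as the paper's proof: it treats $P_e^{(B_1)}$ and $P_e^{(A_1)}$ via the $\tau$-independence of $\lambda^{(3)}$ and the swap $\lambda_\tau^{(1)}\leftrightarrow\lambda_\tau^{(2)}$, and shows that the $A_2$ and $A_3$ sums over $a_1$ exchange while the $B_2$ sum is invariant. The only difference is cosmetic: where the paper argues that the clockwise versus anticlockwise rotation of $\Tilde{\alpha}_{\tau}(a_1+\tau)$ yields the same set of values, you make this precise with the identity $\Tilde{\alpha}_{M-\tau}(a)=\Tilde{\alpha}_{\tau}(-a)$ (using $x_0[M-\tau]=x_0[\tau]$) and the explicit relabeling $a_1\mapsto -a_1 \bmod M$.
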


\begin{proof}
$P_e^{(B_1)}$ doesn't depend on $\tau$ (see \eqref{eqPdB1}).
It is straightforward to see from \eqref{eqPdA1} that $P_e^{(A_1)}$ does not change if considering $\tau$ or $M-\tau$ delays.
$P_e^{(B_2)}$ depends on $P_{d/W}^{(B_2)}$ where only the term $\Tilde{\alpha}_{\tau}(a_1+\tau)$ in (\ref{eqdrenfgeq}) depends on $\tau$.
One can verify $\Tilde{\alpha}_{\tau}(a_2) = \sqrt{P_I}e^{j\phi}e^{-2j\pi\tau a_2/M} x_0[M-\tau]$ and $\tilde{\alpha}_{M-\tau}(a_2) = \sqrt{P_I}e^{j\phi}e^{2j\pi\tau a_2/M} x_0[M-\tau]$ with $a_2=a_1+\tau$.
The only difference is the clockwise versus anticlockwise rotation in the complex exponential, but for $a_2$ in the set $\{0,1,\ldots,M-1\}$, the direction of rotation doesn't change the set of values of $\Tilde{\alpha}_{\tau}(a_2)$ or $\Tilde{\alpha}_{M-\tau}(a_2)$.
By shifting the set by $\tau$ (\textit{i.e.} $\{\tau,\tau+1,\ldots,\tau+M-1\}$) it gives the same set of values (it just produces a circular shift).
As the evaluation of $P_e^{(B)}$ performs a sum over all symbols $a_1$ in \eqref{eqPeBInterf}, the result of $\sum_{a=0}^{M-1}P_e^{(B_2)}(a)$ for a given delay path $\tau$ or $M-\tau$ leads to the same result.
It remains to examine the  $A_2$ and $A_3$ cases. 
From \eqref{eqPdA2} and \eqref{eqPdA3}, by changing $\tau$ by $M-\tau$ leads to permute $P_{d/W}^{(A_2)}$ to $P_{d/W}^{(A_3)}$ except for the clockwise versus anticlockwise rotation of $\Tilde{\alpha}_{\tau}(a_1+\tau)$ in \eqref{eqdrenfg} and \eqref{eqdrenfd}.
Hence, by changing $\tau$ by $M-\tau$ in the evaluation of $P_e^{(A)}$, the sum  $\sum_{a=0}^{M-1}P_e^{(A_2)}(a)$ turns into $\sum_{a=0}^{M-1}P_e^{(A_3)}(a)$, and vice versa.
The results for $M-\tau$ are then equivalent to those obtained with $\tau$.
\end{proof}

\subsubsection{Influence of $\phi$} \label{subsubsec:phi}
Contrary to performance of the non-coherent detection for the $MPC$ model, performance for the interference model depends on $\phi$ because the additive interference terms in \eqref{eqdrenfg}, \eqref{eqdrenfd} and \eqref{eqdrenfgeq} depend on $\phi$ via $\Tilde{\alpha}_{\tau}(a)$ given in \eqref{eq:alphat interf}.
However, its influence for LoRa modulation with $SF$ in $\{7,8,\ldots\}$  has a very low impact on performance.
By examining the complex exponential term of $\Tilde{\alpha}_{\tau}(a_1+\tau)$ with respect to $a_1$, the possible angles $\psi_{\tau}(m)$ of  $\Tilde{\alpha}_{\tau}(a_1+\tau)$ are:
\begin{equation}
\label{eq:psi}
 \psi_{\tau}(m)=2\pi\frac{m}{M_1}+\theta_{\tau}+\phi
\end{equation}

for $m=0,\ldots,M_1-1$ where $\theta_{\tau}$ is the angle of $x_0[M-\tau]$ and $M_1=M/2^n$ for $\tau$ expressed as $\tau=k2^n$ with $k$ odd (see proposition \ref{prop:complexity}). 
By replacing $\phi$ by $\phi+2\pi/M_1$ in (\ref{eq:psi}), we obtain the same equivalent set of $\Tilde{\alpha}_{\tau}(a_1+\tau)$ for $a_1=0,\dots,M-1$, which involves  \textcolor{black}{$SER$} performance is $\frac{2\pi}{M_1}$-periodic in term of $\phi$.
For $\tau$ an odd number, $M_1=M=2^{SF}$ the domain of study of $\phi$ becomes very small and its influence tends to be negligible
\textcolor{black}{for the typical LoRa $SF\ge7$
values}.

\textcolor{black}{
A fine analysis based on the variation of the minimum distance ($d_{\min}$) of (\ref{eqdrenfg})-(\ref{eqdrenfgeq}) with respect to $\phi$ allows us to obtain the  two extreme $\phi$ values: $\{\phi^{\min},\phi^{\max}\}$ where $\phi^{\min}$ leads to the worst performance (\textit{i.e.} $\min_{\phi}d_{\min}$), and $\phi^{\max}$ to the best performance (\textit{i.e.} $\max_{\phi}d_{\min}$). 
For the sake of brevity and clarity, we give only the results on the $\phi^{\min}$ and $\phi^{\max}$ values. For $\tau$ even: $\phi^{\min}=0$ and  $\phi^{\max}=\pi/M_1$, and for $\tau$ odd: $\phi^{\min}=\pi/M$ and  $\phi^{\max}=0$. Simulation part (Section VII-B) will confirm that $\phi$ has a very low influence on $SER$ except at $\tau=M/2$ where a significant difference between $\phi=\pi/M_1=\pi/2$ and $\phi=0$ can be observed. 
}

\subsubsection{Application of the study about $\phi$ on the $MPC$ model} \label{secsec:phiMPC}
We have shown in subsection \ref{subsec:Pd} that theoretical performance for the non-coherent detection is independent of the phase $\phi_i$ of the complex path gain $\alpha_i$ because  \textcolor{black}{$SER$} performance depends only on the modulus $|\alpha_i|$.
However for the coherent detection scheme, it depends on $\Re\{\Tilde{\alpha}_{\tau}(a)\}$ in \eqref{eqdc} where the phase $\phi_i$ is present.
By applying a similar analysis as \ref{subsubsec:phi} about the $\phi$ influence for the interference model, we obtain that the domain of study for the variation of $\phi_i$ is limited at $[0,\pi/M_1]$.
For the $MPC$, we supposed $\tau\ll M$ (contrary to the interference model) then $M_1$ is large enough for even $\tau$ and $M_1=M$ for odd $\tau$.
The domain of study \textcolor{black}{$[0,\pi/M_1]$} tends to 0 for the typical LoRa $SF\ge7$ values. 
We conclude that $\phi_i$ has a very low impact on coherent performance for the $MPC$ model, which is confirmed by simulations.

\section{simulation results}\label{sec:Simu}

We present in this section several simulation results to validate theoretical  \textcolor{black}{$SER$} performance set forth herein.
The signal-to-noise ratio ($SNR$) presented in the  \textcolor{black}{$SER$} plots is defined as $1/\sigma^2$.
Simulation results are done with both coherent and non-coherent detection schemes.
Two examples of $MPC$ are considered, and LoRa interference is investigated.

\textcolor{black}{Simulations are performed by using the discrete-time $MPC$ or the aligned interference model, and by applying the $DFT$ on the received dechirping signal, which is corresponding to the exact $DFT$ expressions derived in \eqref{eqRak2_IES} and \eqref{eqRk2_IES} for both scenarios, whereas for the theoretical assessment the $M_i[\textbf{.},\textbf{.}]$ terms in \eqref{eqRak2_IES} and \eqref{eqRk2_IES} are neglected.
}

\subsection{Performance evaluation under MPC}

\subsubsection{Numerical validation}\label{subsec:valid}
We first consider the two-path ($K=2$) channel $C_1(z) = 1 + \alpha_1 z^{-k_1}$. 
The $g^{(c)}$-function in (\ref{eqPeiDetail}) becomes:
\begin{equation}
\begin{split}
  g^{(c)}(\sigma\sqrt{M}{w})&=1-F_{\chi_{NC}^2}\left(\frac{2|\sqrt{M}+\sigma {w}|^2}{\sigma^2}\;;\;\lambda_1^{(c)}\right)\\
  &F_{\chi^2}\left(\frac{2|\sqrt{M}+\sigma {w}|^2}{\sigma^2}\right)^{M-2}
\end{split}
  \label{eq:chi2echo}
\end{equation}
with $\lambda_1^{(c)}$ given in (\ref{eqLambdaic}).  We have: $\lambda^{(1)}_1
= 2M |\alpha_1|^2/\sigma^2$ and $\lambda_1^{(2)} = 2(M - k_1)^2|\alpha_1|^2/(M
\sigma^2)$. 
The numerical evaluation of  \textcolor{black}{$SER$} is obtained by using the Gauss-Hermite procedure \eqref{eqPeiDetail} with \eqref{eq:chi2echo} for $c=1$ and $c=2$, and then by using \eqref{eqPeApprox}.

\textcolor{black}{Figure \ref{fig:simu5} compares simulation and theoretical $SER$ results for both non-coherent and coherent schemes.
We consider here $SF=7$, two different delays and path gains, $k_1 \in \{1,10\}$ and $\alpha_1 \in \{0.7,0.9\}$, respectively.
In the figure, markers and line styles indicate respectively delay and path gain values.
Green curves indicate simulations.
From the figure, we may see that simulation results fit very well with theoretical $SER$ expression for both coherent and non-coherent cases.
The very slight bias is due to simplifications done in theoretical $SER$ computation process.
This confirms good adequacy between theoretical and simulation results.}

\begin{figure}[ht]
  \centering
  \includegraphics[width=0.49\textwidth]{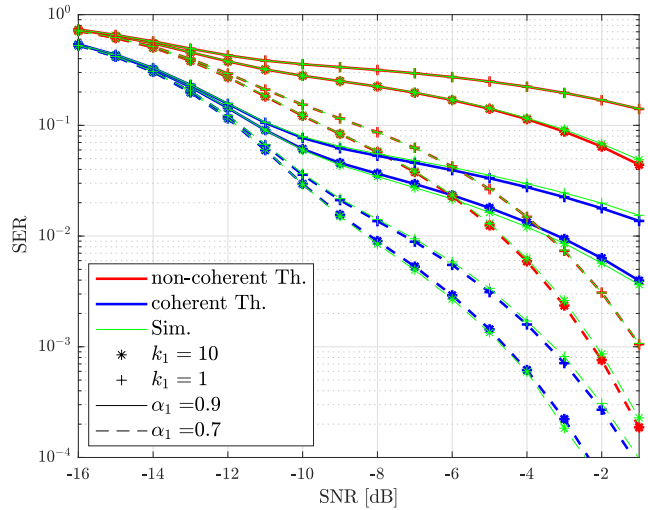}
  \caption{Theoretical \textcolor{black}{(Th.)} and simulation \textcolor{black}{(Sim.)} \textcolor{black}{$SER$} performance comparison over the two-path channel for non-coherent and coherent schemes.
  $SF=7$, $\alpha_1 = \{0.7,0.9\}$ and $k_1 = \{1,10\}$.
  }
\label{fig:simu5}
\end{figure}

It is worth noting in all rigor, performance depends on the phase $\phi_i$ of the path gain $\alpha_i$ as shown in \eqref{eqRak2_IES} where $\phi_i$ appears also in the simplified terms  $M_i[.;.]$.
However, by simplifying the $M_i[.;.]$ terms to derived the theoretical expressions, the non-coherent detection performance depend only of $|\alpha_i|$, and for the coherent case, we have shown in \ref{secsec:phiMPC} that the influence of $\phi_i$ is negligible.
We have considered the two extreme cases $\phi=0$ and $\phi=\pi/M_1$ (see \ref{subsubsec:phi}) for the angle of $\alpha_1$ in simulations, no difference was observed in the Symbol Error Rate ($SER$) (for the sake of clarity only $\phi=0$ is presented).\\

\subsubsection{Performance degradation for the two-path channel}\label{sec:degradation}

In this paragraph we quantify the performance degradation due to the presence of one echo at different delays and amplitudes in comparison with the ideal one-path channel.

Figure \ref{fig:simu} shows theoretical  \textcolor{black}{$SER$} results for $SF=7$ and channel $C_1(z)$ with $\alpha_1 \in \{0,\,0.2,\,0.4,\, 0.6,\, 0.8,\, 0.9\}$ and $k_1\in\{1,3,5,7,9,11\}$.
$\alpha_1=0$ corresponds to the one-path channel.

\begin{figure}[ht]
  \centering
  \includegraphics[width=0.49\textwidth]{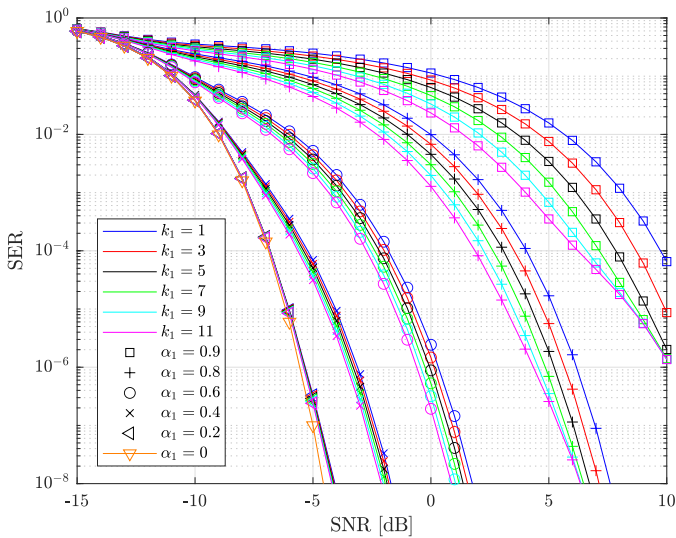}
  \caption{$\alpha_1$ and $k_1$ channel parameters influence on  \textcolor{black}{theoretical $SER$} performance with $SF=7$ over 
the two-path channel for $\alpha_1 = \{0,0.2,0.4,0.6,0.8,0.9\}$ and $k_1 = \{1,3,5,7,9,11\}$.}
\label{fig:simu}
\end{figure}

We note that performance loss is significant for $\alpha_1 \ge 0.4$.
This reveals that the more the echo is far from direct path (\textit{i.e.} $k_1$ bigger), the better performance are, confirming prediction from \eqref{eqRak2_IES}.
When $\alpha_1 \le 0.4$, $k_1$ has a negligible impact on  \textcolor{black}{$SER$}. 
We highlight that for $\alpha_1 = \{0.8,0.9\}$ and $k_1=11$, performance converges towards those for $k_1 = 9$.
These values are confirmed via $SER$ simulations (not shown in the figure for clarity). 
At high $SNR$ and $k_1$ large enough (e.g. $k_1=11$), an error when $a^-=a$ exhibits the interference peak value $M\Tilde{\alpha_1}(a)$ at frequency index $a-k_1$ (value required for $P_e^{(1)}$) whereas for $a^-\neq a$ the interference peak value at $a-k_1$ is smaller and equals to  $(M-k_1)\Tilde{\alpha_1}(a)$ (value required for $P_e^{(2)}$).
Even if the event $a^-=a$ is rarer, $P_e^{(1)}$ is much higher than $P_e^{(2)}$, and becomes a dominant term in \eqref{eqPeApprox} as shown the changing behavior in Figure \ref{fig:simu}.
\textcolor{black}{We may see that the bandwidth parameter $B$ has also impact on $SER$ performance.
For example, let us consider a two-path channel with an echo fixed at $8 \mu s$.
This represents a tap delay $k_1 = \{4,2,1\}$ for $B = \{500,250,125\}$ kHz, respectively.
According to Figure~\ref{fig:simu}, we have seen that for a fixed echo path gain, performance is better if its delay increases.
Therefore, we expect to have slightly better performance for higher bandwidths.
}

Figure \ref{fig:simu2} shows  \textcolor{black}{theoretical} \textcolor{black}{$SER$} for different $SF$ with $k_1=1$ at different path gains $\alpha_1$.
Performance increases with higher $SF$.
Indeed, for $\alpha_1 = 0$ (ideal one-path channel), performance gain is about 3.5~dB when increasing $SF$ by 1.
We observe the same gain (around 3.5~dB) for each amplitude $\alpha_1 \neq0$ when increasing $SF$ by 1.
$\alpha_1$ has obviously a huge impact leading to a performance loss about 12~dB (measured at  \textcolor{black}{$SER$} $=10^{-8}$) from $\rho = 0$ to $\rho = 0.8$, whatever $SF$ is.
Table \ref{tab:pertes} gives the performance losses in dB (at  \textcolor{black}{$SER$} $=10^{-8}$) for a given $SF$ between two arbitrary values of $\alpha_1$.
We globally observe the same performance loss whatever $SF$ is.
\textcolor{black}{We conclude that $SF$ is a crucial parameter to make LoRa resilient to multi-path environments, that is coherent with conclusions drawn in \cite{staniec}.}

\begin{figure}[ht]
  \centering
  \includegraphics[width=0.49\textwidth]{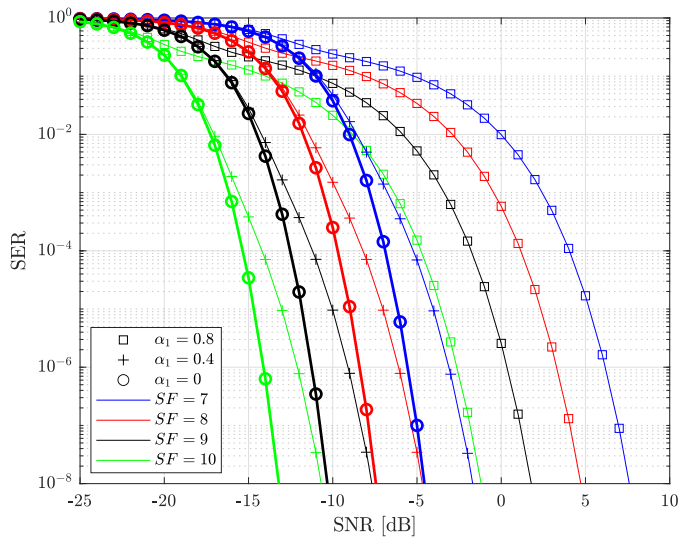}
  \caption{$\alpha_1$ and $SF$ parameters influence on  \textcolor{black}{theoretical $SER$} performance over
    the two-path channel for $SF = \{7,8,9,10\}$ with $k_1 =1$ and $\alpha_1 =
    \{0,\,0.4,\,0.8\}$.
    Legend: a different marker for each path gain $\alpha_1$, a different color for each $SF$, and thick lines for the one-path channel ($\alpha_1=0$).
}
\label{fig:simu2}
\end{figure}

\begin{figure}[ht]
  \centering
  \includegraphics[width=0.49\textwidth]{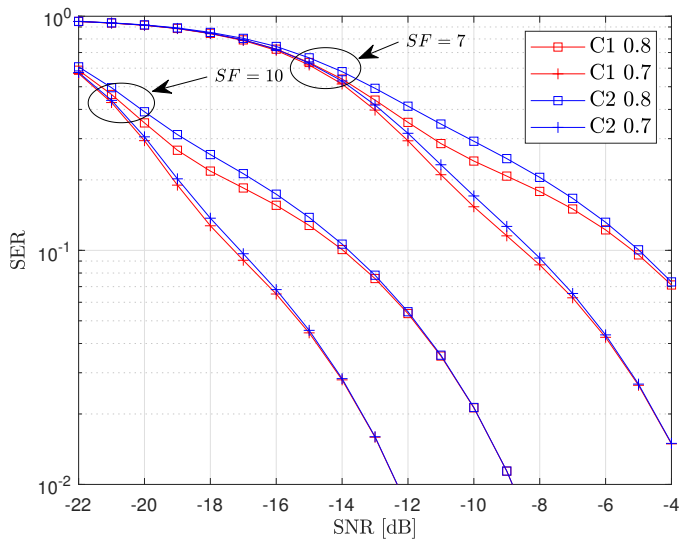}
  \caption{ \textcolor{black}{Theoretical $SER$} comparison between the two-path channel (C1) an the exponentially decreasing channel (C2) for $SF = \{7,10\}$ and $\rho = \{0.7,0.8\}$.}
\label{fig:simu3}
\end{figure}

\begin{table}[ht]
\begin{center}
\begin{tabular}{@{}l||l|l|l|l|l|l@{}}
$SF$ & 7 & 8 & 9 & 10 & \textcolor{black}{11} & \textcolor{black}{12} \\\hline 
$\Delta_1$ $\alpha_1=0\rightarrow0.4$ & 2.89 & 2.76 & 2.64 & 2.51 & \textcolor{black}{2.40} & \textcolor{black}{2.31} \\\hline
$\Delta_2$ $\alpha_1=0.4\rightarrow0.5$ & 1.58 & 1.57 & 1.58 & 1.58 & \textcolor{black}{1.60} & \textcolor{black}{1.59} \\\hline
$\Delta_3$ $\alpha_1=0.5\rightarrow0.6$ & 1.89 & 1.91 & 1.92 & 1.91 & \textcolor{black}{1.90} & \textcolor{black}{1.93} \\\hline
$\Delta_4$ $\alpha_1=0.6\rightarrow0.7$ & 2.42 & 2.46 & 2.47 & 2.48 & \textcolor{black}{2.49} & \textcolor{black}{2.47} \\\hline
$\Delta_5$ $\alpha_1=0.7\rightarrow0.8$ & 3.41 & 3.46 & 3.51 & 3.50 & \textcolor{black}{3.50} & \textcolor{black}{3.53} \\\hline
\begin{tabular}{l}cumulative loss\\ $\alpha_1=0\rightarrow0.8$\end{tabular} & 12.19 & 12.16 & 12.12 & 11.98 & \textcolor{black}{11.89} & \textcolor{black}{11.83} \\\hline  
\end{tabular}
\end{center}
\caption{Performance loss in dB $\Delta_i$ (measured values) at  \textcolor{black}{$SER$} equals to $10^{-8}$ between $\alpha_1=x\rightarrow y$ for a given $SF$.
To find the cumulative loss between 2 values, losses have to be added.}
\label{tab:pertes}
\end{table}

\subsubsection{Performance evaluation in the exponential decay channel}

We consider now the exponential decay channel $C_2(z)=\sum_{i=0}^{K-1} \rho^i z^{-i}$.
The maximum path number $K$ is determined by satisfying the condition $|\rho|^K \le 0.2$.
The $g^{(c)}$-function in \eqref{eqPeiDetail} used for  \textcolor{black}{$SER$} evaluation becomes:
\begin{equation}
\begin{split}
  g^{(c)}(\sigma\sqrt{M} {w})&=1-\prod_{i=1}^{K-1} F_{\chi_{NC}^2}\left(\frac{2|\sqrt{M}+\sigma {w}|^2}{\sigma^2}\;;\;\lambda_i^{(c)}\right)\\
& F_{\chi^2}\left(\frac{2|\sqrt{M}+\sigma {w}|^2}{\sigma^2}\right)^{M-K}
\end{split}
 \label{eq:chi2echo2}
\end{equation}
with $\lambda_i^{(1)}=2M|\alpha_i|^2/\sigma^2=2M|\rho|^{2i}/\sigma^2 $ and $\lambda_i^{(2)}=2(M-i)^2|\rho|^{2i}/(\sigma^2M)$ for $i=1,\ldots,K-1$. 

Performance over channel $C_2(z)$ are very close to the channel $C_1(z)$ one's presented in Figure \ref{fig:simu2}, which seems to say that only the first most significant path degrades  \textcolor{black}{$SER$} performance.
For $\rho\le0.6$ differences between $C_1(z)$ and $C_2(z)$ are negligible whatever $SF$.
However, for $\rho = \{0.7,0.8\}$, Figure \ref{fig:simu3} focuses on the  \textcolor{black}{$SER$} range where we could observe a difference between the two channels.
We observe a slight additional degradation with $C_2(z)$ whatever $SF$ (only $SF = \{7,10\}$
are considered in the figure).

\subsection{Performance evaluation under LoRa interference}

We evaluate LoRa performance of the non-coherent detection scheme in presence of LoRa interference (with same $SF$) in function of the interferer delay $\tau$ for $SIR_{dB} = 3$, $SF = \{8,\,10,\,12\}$ and different $SNR_{dB}$ values.

We consider multiple even $\tau$ values.
The $\tau$-range is given by $\tau = \{0,\,\tau_{step}\,,2\tau_{step}\,,3\tau_{step}\,,\ldots,(2^5-1)\tau_{step}=M-\tau_{step}\}$ with the
$\tau$-step value at $\tau_{step} = 2^{SF-5}$.
Notice that $16\tau_{step}=2^{SF-1}=M/2$.

The channel phase $\phi$ of the interferer user is chosen at $\phi=\phi^{\min}=0$ and $\phi = \phi^{max}=\pi/M_1$, that respectively corresponds  to the unfavorable and the favorable performance, for even $\tau$ (see paragraph \ref{subsubsec:phi}).
Note that $M_1=M/2^n$ with $\tau$ in the following form $\tau=k2^n$ (odd $k$).
Theoretical  \textcolor{black}{$SER$} is compared to simulated $SER$ in Figure \ref{fig:simu_LoRa_interf} for $\phi=\phi^{\max}$ and $\phi=\phi^{\min}$.
We observe a significant difference for theoretical \textcolor{black}{$SER$} (Th. $\phi^{\max}$ versus Th. $\phi^{\min}$) or simulated $SER$ (Sim. $\phi^{\max}$ vs. Sim. $\phi^{\min}$) only at $\tau=M/2$. 
These results confirm that $\phi$ has no influence on  \textcolor{black}{$SER$} except for the special case $\tau=M/2$ as it was discussed in paragraph \ref{subsubsec:phi}.
It is worth noting that progressively increasing $SF$ values reduces the bias between our theoretical and simulated results for $\tau$ around $M/2$ (Th. vs. Sim. for $\phi^{\min}$ or Th. vs. Sim. for $\phi^{\max}$).

Otherwise, from $\tau=\tau_{step}$ (or from its symmetric value, $\tau=M-\tau_{step}$) to $\tau=M/2$, we observe in Figure~\ref{fig:simu_LoRa_interf} a big variation in term of  \textcolor{black}{$SER$}.
The  \textcolor{black}{$SER$} variation is bigger as $SF$ increases.

\begin{figure}[ht]
  \centering
  \includegraphics[width=0.49\textwidth]{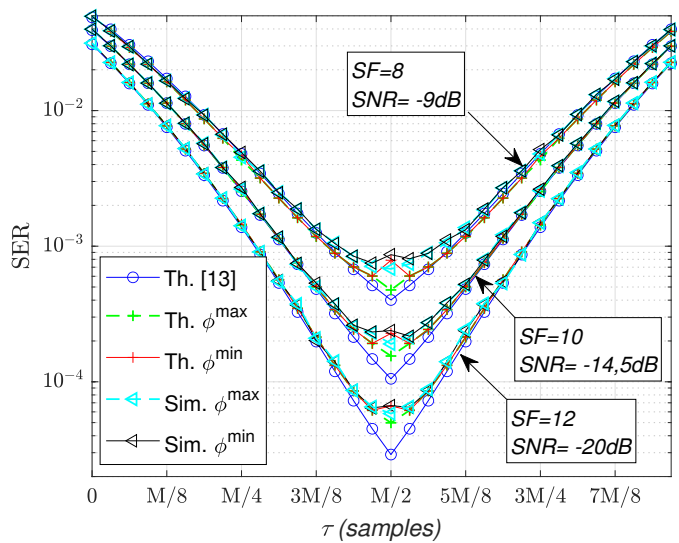}
  \caption{\textcolor{black}{Theoretical (Th.) and simulation (Sim.) performance} comparison of LoRa interference versus even $\tau$ values for $SIR_{dB}=3$.
  The extreme phase values $\phi^{\max}=\pi/M_1$ and $\phi^{\min}=0$ are considered for our theoretical (Th.) and simulated (Sim.) results.
  Theoretical  \textcolor{black}{$SER$} of \cite{joerg5} is also reported for fixed $\tau$ values.
  For comparison with LoRa modulation over $AWGN$,  \textcolor{black}{$SER$} is 0.9781 $\times 10^{-5}$, 0.4788 $\times 10^{-5}$ and 0.1792 $\times 10^{-5}$ for $SF=8$, 10 and 12, respectively.  
  }
\label{fig:simu_LoRa_interf}
\end{figure}

\begin{figure}[ht]
  \centering
  \includegraphics[width=0.49\textwidth]{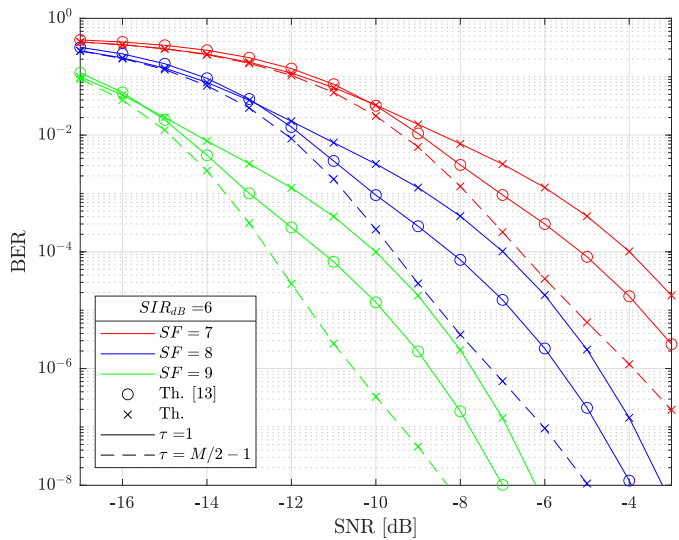}
  \caption{\textcolor{black}{Theoretical $BER$} comparison of LoRa interference for worst and best $\tau$ values ($\tau=1$ and $\tau=M/2-1$, respectively).
  $SF=\{7,8,9\}$ and $SIR_{dB}=6$. Theoretical average \textcolor{black}{Theoretical $BER$} of \cite{joerg5} is also reported.}
\label{fig:simu_LoRa_interf_compar_joerg_nous}
\end{figure}

The authors in \cite{joerg5} have already derived an approximated LoRa interference expression in aligned context (integer $\tau$).
They did not compute $SER$ performance for a specific $\tau$ but rather an average performance over $\tau \in \{0,1,\ldots,M/2\}$ with symmetric performance assumed (see proposition \ref{prop:sym}).
\textcolor{black}{This implies the need to modify \cite[eq.~(28)]{joerg5} for a given $\tau$.}
The interference only error probability \textcolor{black}{\cite[eq.~(28)]{joerg5}} becomes then $P_{e_{(I)}}^{(I)} \approx \frac{1}{M} \sum_{a_1=0}^{M-1} Q(\frac{1-\mathcal{U}_0(a_1,\tau)}{\sigma\sqrt{2}})$ for $\tau \in \{0,1,\ldots,M/2\}$.
$Q(.)$ denotes the Q-function and $\mathcal{U}_0$ given by \cite[eq.~(21)]{joerg5}.
Their final  \textcolor{black}{$SER$} expression is $P_e^{(I)} = P_{e_{(N)}}^{(I)} + (1 - P_{e_{(N)}}^{(I)}) P_{e_{(I)}}^{(I)}$ with $P_{e_{(N)}}^{(I)}$ the $AWGN$ only error probability derived in \cite[eq.~(21)]{joerg2}.

From Figure \ref{fig:simu_LoRa_interf}, theoretical performance of \cite{joerg5} at fixed $\tau$ agrees with our  \textcolor{black}{$SER$} except for $\tau$ around $M/2$.
The reason probably comes from the fine analysis of the \textcolor{black}{additive interference} terms in \eqref{eqdrenfg}-\eqref{eqdrenfgeq}
where performance loss could rise at some particular values of $\tau$ like $\tau=M/2$.
Their theoretical developments do not take into account such a behavior.
However, the computational complexity of  \textcolor{black}{$SER$} derived in \cite{joerg5} is very low in comparison to our results.
Nevertheless the computing time to evaluate our expressions remains reasonable. 

Figure \ref{fig:simu_LoRa_interf_compar_joerg_nous} highlights the extreme bounds of \textcolor{black}{Bit Error Rate ($BER$) $BER\approx SER/2$} performance reachable depending on $\tau$ for $SF \in \{7,8,9\}$.
The average \textcolor{black}{$BER$} over $\tau$ in \cite[Figure~3]{joerg5} is also reported in the figure.
It's worth noting that $\tau$ has a significant impact on performance when compared to the average performance over $\tau$.
As seen in Figure~\ref{fig:simu_LoRa_interf} best performance is obtained at $\tau = M/2-1$.
The less $\tau$ ($\tau<M/2$) the worse performance.
This can be seen as lower and upper performance bounds of results presented in \cite{joerg5}.
At low $SNR$ we observe a little bias for \cite{joerg5} because the average \textcolor{black}{$BER$} is slightly higher than the worst performance at $\tau=1$.
Simulation (not plotted for clarity) confirms the convergence of the lines at low $SNR$ for $\tau=1$ and $\tau=M/2-1$.

\section{Conclusion}

In this paper, a deep analysis of LoRa performance in $MPC$ is proposed.
An approximate closed-form of theoretical  \textcolor{black}{$SER$} for both coherent and non coherent detection schemes is derived and validated by simulation results.
This analysis highlights several significant LoRa behaviors under $MPC$.  
LoRa seems to be sensitive only to the first path of the exponential decay channel 
with reduced downside on performance for the other lower path gains.
This enables simple LoRa equalization schemes considering only a reduced number of channel paths (or the strongest path gain) to estimate.
For a path gain with $|\alpha|\ge0.4$, the performance degradation could be very \textcolor{black}{significant} in comparison to the $AWGN$ performance (e.g. performance loss about 12~dB for $|\alpha|=0.8$ whatever $SF$).
Otherwise, at fixed path gain, the less is the channel delay spread, the worse the performance.
This effect of delay spread vanishes for $|\alpha|\le0.4$.

\textcolor{black}{
As stated in the manuscript, performance evaluation is performed in the aligned context.
Our $SER$ results can be seen as pessimistic results in the sens that if time arrival of the echo is not multiple of the sampling rate, the echo energy is spread over neighbor $DFT$ bins that implies a lower parasitic peak energy in the most significant $DFT$ bin.
As we observe performance is dominated by the strongest $DFT$ bin (with the exponential decay channel), an equivalent performance evaluation could be deduced by considering only the most significant real peak amplitude in the $DFT$ bin.
If several parasitic peaks with equal (or near-equal) amplitudes appear in the $DFT$ bins (\emph{e.g.} time of arrival at half of the sample period), these values have to be considered in our analytical $SER$ expression to evaluate performance.
}

The article showed that theoretical developments for $MPC$ case can be \textcolor{black}{applied} to derive the
\textcolor{black}{aligned}
LoRa interference case, showing interesting results: the performance gap between  \textcolor{black}{$SER$} for $\tau$ around the half of symbol interval and for small $\tau$ value could be significant, and this difference is bigger as $SF$ increases.
\textcolor{black}{Performance degradation is extreme when the interference signal and the signal of interest arrive at the same time.
This extends in a complementary manner \cite{joerg5} findings by adding $SER$ deviation as a function of $\tau$ to the average performance.
}

We also show that the influence of the channel phase $\phi$ of the LoRa interferer vanishes on performance for the typical $SF\ge7$ values.
Only for \textcolor{black}{the} particular $\tau$ value at $M/2$ could exhibit  \textcolor{black}{$SER$} variation depending of~$\phi$.

\textcolor{black}{The authors from \cite{afisiadis2} took into account the non-aligned case for interference scenario and derived approximated $SER$ with performance averaged on $\tau$.}
\textcolor{black}{It may be interesting to extend our results by comparing \cite{afisiadis2} at fixed $\tau$ values and also consider $MPC$ study.}
\appendices

\section{Output $DFT$ in presence of $ISI$}
\label{sec:outDFT}
  
From (\ref{eqrak2_IES}), the $DFT$ of $\Tilde{r}_a[k]$ is equal to:
 \begin{equation}
 \label{eq:Ra1_demo}
   \Tilde{R}_a[n]=\alpha_0 \sum_{k=0}^{M-1} e^{2j\pi k \frac{a-n}{M}} +
   \sum_{i=1}^{K-1} S_i[n;\overline{a}] + \Tilde{W}[n]
 \end{equation}
 with:
 \begin{equation}
 \label{eq:Ra1_demo2}
   S_i[n;\overline{a}]=\Tilde{\alpha}_i(\overline{a})\sum_{k=0}^{M-1} e^{2j\pi k
     \frac{\overline{a} - k_i-n}{M}}
 \end{equation}
for $n=0,1,\ldots,M-1$.
The first term in (\ref{eq:Ra1_demo}) is equal to $M\delta[a-n]$ because of the following identity:

 \begin{equation}
 \label{eq:Ra1_demo3}
 \sum_{k=0}^{M-1} e^{2j\pi k \frac{q}{M}}=0\text{ for }q\in\mathbb{N}^*  
 \end{equation}

The second term $S_i[n;\overline{a}]$ in \eqref{eq:Ra1_demo} for $\overline{a}=a^-$ or $\overline{a}=a$ (depending of the time index $k$ in \eqref{eq:Ra1_demo2}) can be decomposed as:

 \begin{align}
   \label{eq:Mi_a_bar}
   S_i[n;\overline{a}]=& \Tilde{\alpha}_i(a^-)\sum_{k=0}^{k_i - 1} e^{2j\pi
     k/M(a^- - k_i - n)}\nonumber\\
 &+\Tilde{\alpha}_i(a)\sum_{k=k_i}^{M-1} e^{2j\pi
     k/M(a - k_i - n)}
 \end{align}

 Depending on the values of $n$, $S_i[n;\overline{a}]$ leads to the two following
 results:
 \begin{itemize}
 \item for $n\neq a-k_i$ and thanks to (\ref{eq:Ra1_demo3}), we obtain:
  \begin{align}
 S_i[n;\overline{a}]&=M_i[n;a^-]-\Tilde{\alpha}_i(a)\sum_{k=0}^{k_i - 1} e^{2j\pi
     k/M(a - k_i - n)}\\
 &=M_i[n;a^-]-M_i[n;a]
 \end{align}
 \item for $n=a-k_i$
 \begin{equation}
   \label{eq:Mi-a-k_i}
 S_i[n;\overline{a}]=M_i[n;a^-]+(M-k_i)\Tilde{\alpha}_i(a)  
 \end{equation}
 \end{itemize}
 
Equations \eqref{eq:Ra1_demo}-\eqref{eq:Mi-a-k_i} show the result summarized in \eqref{eqRak2_IES}.

\section*{Acknowledgment}

This work was jointly supported by the Brest Institute of Computer Science and Mathematics ($IBNM$) CyberIoT Chair of Excellence of the University of Brest, the Brittany Region and the “Pôle d’Excellence Cyber”.

\bibliographystyle{IEEEtran}
\bibliography{IEEEabrv,biblio}

\begin{IEEEbiography}[{\includegraphics[width=1in,height=1.25in,clip,keepaspectratio]{./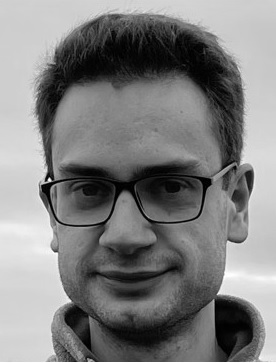}}]{Clément Demeslay} (M’20) received the M.Sc degree in Signal Processing field from University of Brest (UBO), France, in 2020.
As of October 2020, he is pursuing a Ph.D degree related with the development of intelligent jamming techniques to ensure secret, reliable and robust transmissions for several standards such as IoT (LoRa), 5G \& Beyond and also Space applications.
\end{IEEEbiography}

\begin{IEEEbiography}[{\includegraphics[width=1in,height=1.25in,clip,keepaspectratio]{./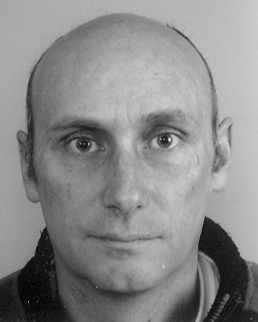}}]{Philippe Rostaing} (M'14) received the Ph.D degree in electrical engineering from the University of Nice-Sophia Antipolis, Nice, France, in 1997.
From 1997 to 2000, he was Assistant Professor with the French Naval Academy, Lanvéoc-Poulmic, France.
Since 2000, he has been Assistant Professor of digital communications and signal processing with the University of Brest, Brest, France. He is a member of the Laboratory Lab-STICC (UMR CNRS 6285) in the Security, Intelligence and Integrity of Information (SI3) Team.
His main research interests include signal processing and coding theory for wireless communications with emphasis on MIMO precoding systems.
\end{IEEEbiography}

\begin{IEEEbiography}[{\includegraphics[width=1in,height=1.25in,clip,keepaspectratio]{./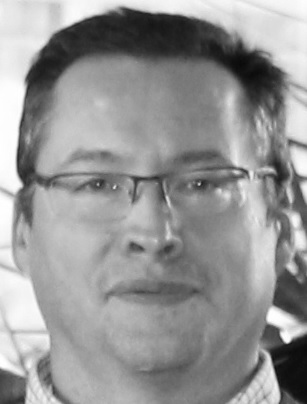}}]{Roland Gautier} (M'09) received the M.Sc degree from the University of Nice-Sophia Antipolis, France, in 1995, where his research activities were concerned with the blind source separation of convolutive mixtures for MIMO systems in digital communications.
He received the Ph.D degree in electrical engineering from the University of Nice-Sophia Antipolis, France, in 2000, where his research interests were in experiment design for nonlinear parameters models.
From 2000 to 2001, he was Assistant Professor with Polytech-Nantes, the engineering school of the University of Nantes, France.
Since September 2001, he has worked with the University of Brest, France, as an Associate Professor in electronic engineering and Signal Processing.
His general interests lie in the area of signal processing and digital communications.
His current research focuses on digital communication intelligence (COMINT), analysis, and blind parameters recognition, Multiple-Access and Spread Spectrum transmissions, Cognitive and Software Radio, Cybersecurity, Physical Layer Security for communications, Drones communications detection and Jamming.
From 2007 to June 2012, he was assistant manager of the Signal Processing Team, within the Laboratory for Science and Technologies of Information, Communication and Knowledge (Lab-STICC - UMR CNRS 6285).
From July 2012 to august 2020, he was the manager of the Defense Research Axis, within the COM Team of the Lab-STICC.
He received his Habilitation to Supervise Research (HDR) from the University of Brest in 2013 presenting an overview of his post-doctoral scientific research activities on the development of "self-configuring multi-standard adaptive receivers : Blind analysis of digital transmissions for military communications and Cognitive Radio".
Since 2018, he is the holder of the "CyberIoT" Chair of Excellence.
Since September 2020, he is the manager of the Security, Intelligence and Integrity of Information (SI3) Team of the Lab-STICC.
\end{IEEEbiography}

\end{document}